\renewcommand{\vec}[1]{\underline{#1}}
\newtheorem{thm}{Theorem}
\newtheorem{lemma}[thm]{Lemma}
\newtheorem{cor}[thm]{Corollary}
\theoremstyle{definition}
\newtheorem{exam}{Example}
\newcommand{\tr}{{\mathrm{Tr}}}
\newcommand{\gf}{{\mathrm{GF}}}
\newcommand{\ca}{{\mathcal{C}_{(d_0,d_1,\cdots,d_t)}^{(1)}}}
\newcommand{\cb}{{\mathcal{C}_{(\widetilde{d}_1,\cdots,\widetilde{d}_t)}^{(2)}}}
\newcommand{\catwo}{{\mathcal{C}_{(d_0,d_1)}^{(1)}}}
\newcommand{\cbone}{{\mathcal{C}_{(\widetilde{d}_1)}^{(2)}}}
\newcommand{\cbtwo}{{\mathcal{C}_{(\widetilde{d}_1,\widetilde{d}_2)}^{(2)}}}
\begin{document}

\title{Optimal cyclic codes with generalized Niho type zeroes and the weight distribution}

\author{
Maosheng Xiong\thanks{Department of Mathematics, The Hong Kong University of Science and Technology, Clear Water Bay, Kowloon, Hong
 Kong, China (e-mail: mamsxiong@ust.hk).} \, and \,Nian Li\thanks{Department of Informatics, University of Bergen, 5020 Bergen, Norway (e-mail: nianli.2010@gmail.com).}}

\maketitle


\begin{abstract}
\footnotesize{

In this paper we extend the works \cite{gegeng2,XLZD} further in two directions and compute the weight distribution of these cyclic codes under more relaxed conditions. It is interesting to note that many cyclic codes in the family are optimal and have only a few non-zero weights. Besides using similar ideas from \cite{gegeng2,XLZD}, we carry out some subtle manipulation of certain exponential sums.

\vspace*{2mm}
\noindent{\bf 2010 Mathematics Subject Classification:} {11T71, 94B15, 11L03}

\vspace*{2mm}
\noindent{\bf Keywords:} Cyclic codes, weight distribution, Vandermonde matrix, Niho exponent}

\end{abstract}

\section{Introduction}\label{sec-into}

Cyclic codes are an important class of linear codes. Due to their desirable algebraic properties and efficient algorithms for encoding and decoding processes, cyclic codes have been widely used in many areas such as communication and data storage system. They can also be used to construct other interesting structures such as quantum codes \cite{qc}, frequency hopping sequences \cite{fhs} and so on.

Let $p$ be a prime number, $l \ge 1, q=p^l$ and $\gf(q)$ be the finite field of order $q$. A cyclic code $\mathcal{C}$ of length $n$ over $\gf(q)$ (assume $(n,q)=1$), by the one-to-one correspondence
$$\begin{array}{cccl}
\sigma:& \mathcal{C}&\rightarrow &R:=\gf(q)[x]/(x^n-1)\\
 &(c_0,c_1,\cdots ,c_{n-1})&\mapsto&c_0+c_1x+\cdots +c_{n-1}x^{n-1},
\end{array}$$
can be identified with an ideal of $R$. There exists a unique monic polynomial $g(x)$ with least degree such that $\sigma(\mathcal{C})=g(x)R$ and $g(x)\mid (x^n-1)$. The $g(x)$ is called the \textit{generator polynomial} and $h(x):=(x^n-1)/g(x)$ is called the \textit{parity-check polynomial} of $\mathcal{C}$. If $h(x)$ has $t$ irreducible factors over $\gf(q)$, we follow the literature and say that ``the dual of $\mathcal{C}$ has $t$ zeroes''. (Note that this is different from \cite{XLZD} in which we call ``$\mathcal{C}$ has $t$ zeroes'' instead.) $\mathcal{C}$ is called \emph{irreducible} if $t=1$ and \emph{reducible} if $t \ge 2$.

Denote by $A_i$ the number of codewords of $\mathcal{C}$ with Hamming weight $i$, where $0 \le i \le n$. The study of the weight distribution $(A_0,A_1,\cdots ,A_n)$ or equivalently the weight enumerator $1+A_1Y+A_2Y^2+\cdots+A_nY^n$ is important in both theory and application, because the weight distribution gives the minimum distance and thus the error correcting capability of the code, and the weight distribution allows the computation of the probability of error detection and correction with respect to some algorithms \cite{Klov}. Moreover, the weight distribution is related to interesting and challenging problems in number theory (\cite{cal,Schroof}).

In a recent paper \cite{gegeng2}, the authors constructed some classes of cyclic codes whose duals have two Niho type zeroes and obtained the weight distribution. These classes of cyclic codes are quite interesting because they contain some optimal cyclic codes (among linear codes) and in general have only three or four non-zero weights. This beautiful work was recently extended to more general classes of cyclic codes whose duals may have arbitrary number of Niho type zeroes (see \cite{XLZD}). The purpose of this paper is to extend these works in yet two other directions. It is interesting to note that this not only vastly generalizes the construction of \cite{gegeng2,XLZD}, but also yields many optimal or almost optimal cyclic codes with very few non-zero weights, none of which was present in \cite{gegeng2,XLZD} (See Examples \ref{1:ex1}--\ref{2:ex2}, Tables \ref{Table:ex1} and \ref{Table:ex2} in Section \ref{sec-2}). We study the weight distribution of these cyclic codes, by employing similar ideas from \cite{gegeng2,XLZD} and by carrying out some quite subtle analysis of certain exponential sums.

In recent years, for many families of cyclic codes the weight distribution problem has been solved. We only mention here that most of the results are for cyclic codes whose duals have no more than three zeroes (see for example \cite{AL06,BM72,BM73,fit,McE72,Rao10,schmidt,van,Vega1,Vega2,wol,D-Y12} and \cite{Ding2,FL08,Feng12,F-M12,holl,luo2,luo3,luo4,Ding1,M09,Mois09,
Vega12,Tang12,Xiong1,Xiong2,Xiong3,zeng}). There are only a few results for cyclic codes whose duals may have arbitrary number of zeroes (\cite{gegeng, Y-X-D12,Y-X-X14,XLZD}). The duals of the cyclic codes considered in this paper may also have arbitrary number of zeroes.

The paper is organized as follows. In Section \ref{sec-2}, for any prime $p$, we introduce the cyclic codes $\ca$ and $\cb$ and the main results (Theorems \ref{1:thm1} and \ref{1:thm2}). The special cases of 3-weight and 4-weight cyclic codes are presented in Corollaries \ref{cor:ca} and \ref{cor:cb}. Then we provide numerical examples of optimal or almost optimal cyclic codes over $\gf(4)$ and $\gf(8)$ and compute the weight distribution. We also compile a list of such codes over other finite fields. In Section \ref{sec-00} we prove a simple lemma which will be used later. For $p=2$, Statements (i) of Theorems \ref{1:thm1} and \ref{1:thm2} are proved in Section \ref{sec-3}, and Statements (ii) of Theorems \ref{1:thm1} and \ref{1:thm2} are proved in Section \ref{sec-4}. For $p \ge 3$, Statements (i) and (ii) of Theorems \ref{1:thm1},\ref{1:thm2} are proved in Sections \ref{sec-32} and \ref{sec-42} respectively. As was noted in \cite{gegeng2,XLZD}, we remark here that the proofs for $p=2$ and $p \ge 3$ are quite different. In Section \ref{sec-conclusion} we conclude the paper.


\section{Cyclic codes $\ca$ and $\cb$} \label{sec-2}



From what follows, let $p$ be a prime number, and $l,m$ be positive integers. Let $q=p^l$ and $r=q^m$. Denote by $\gf(r^2)$ the finite field of order $r^2$. Let $\gamma$ be a primitive element of $\gf(r^2)$.


\subsection{Cyclic code $\ca$}
For any integers $h,f$, define
\begin{eqnarray} \label{2:e1} e=(h,r+1), \quad \delta=\left((r+1)f,(r-1)e\right), \quad n=(r^2-1)/\delta. \end{eqnarray}
Let $t$ be an integer and assume that
\begin{itemize}
\item[(a).] $1 \le t < \frac{r+1}{2e}$,

\item[(b).] $\left(f,\frac{r-1}{q-1}\right)=1$,

\item[(c).] if $p$ is odd, then $m$ is odd, or $m$ and $h$ are both even.
\end{itemize}
Let $d_0,d_1,\ldots,d_t$ be integers such that
\begin{eqnarray} \label{2:da1}
d_j \equiv (jh+f) (r-1)+2 f  \pmod{r^2-1}, \quad 0 \le j \le t\, .
\end{eqnarray}

A positive integer $d$ is called a {\em Niho exponent} if $d \equiv q^j\pmod{r-1}$ for some $j$. The Niho exponents were originally introduced by Niho \cite{Niho-PhD} who investigated the cross-correlation between an $m$-sequence and its decimation. Since then, Niho exponents were further studied and had been used in other research topics. Here $d_j \equiv 2f \pmod{r-1}\, \forall j$ and $\left(f,\frac{r-1}{q-1}\right)=1$, the $d_j$'s are called the ``generalized Niho exponents'', and the ${\gamma}^{-d_j}$'s are called the ``generalized Niho type zeroes''. 

It can be seen that $(d_0,d_1,\ldots,d_t,r^2-1)=\delta$. The $q$-ary cyclic code $\ca$ of length $n$ consists of elements $c(\vec{a})$ given by
\begin{equation}\label{2:ca}
\begin{array}{l}
c(\vec{a})=\left(\tr_{r/q}\left(a_0 \gamma^{d_0 i}\right)+\tr_{r^2/q}\left(\sum_{j=1}^t a_j \gamma^{d_j i}  \right)\right)_{i=0}^{n-1},
\end{array}\end{equation}
where $\vec{a}=(a_0,a_1,\ldots,a_t)$ for any $a_1,\cdots,a_{t}\in \gf(r^2)$ and $a_0 \in \gf(r)$. Here $\tr_{r/q}$ and $\tr_{r^2/q}$ denote the trace map from $\gf(r)$ and $\gf(r^2)$ to $\gf(q)$ respectively. It will be seen that the dimension of $\ca$ is always $(2t+1)m$ and the dual of $\ca$ has the $(t+1)$ zeroes ${\gamma}^{-d_0},\ldots, {\gamma}^{-d_t}$.

We remark that a similar $\ca$ over $\gf(p)$ was constructed in \cite{XLZD}, but it required $(2f,r-1)=1$, which is valid only when $p=2$. Here we consider $\ca$ over $\gf(q)$ for any $p$ under more flexible conditions. Note that (b) reduces to $(2f,r-1)=1$ only if $q=p=2$. As it turns out, the weight distribution of the new $\ca$ is very similar to that in \cite{XLZD}. However, the proofs are much more involved. For the sake of completeness, we describe the weight distribution as follows.

Let $N_0=1,N_1=0$ and define
\begin{eqnarray} \label{2:nr}
N_k:=k! e^k \sum_{\substack{\lambda_2,\lambda_3,\ldots,\\
\sum_{j \ge 2} j \lambda_j=k}} \binom{\frac{r+1}{e}}{\sum_{j } \lambda_j} \left(\sum_{j } \lambda_j\right)! \prod_{j } \frac{\left(B_j/j!\right)^{\lambda_j}}{(\lambda_j)!},\quad \forall \, k \ge 2. \end{eqnarray}
Here the summation is over all non-negative integers $\lambda_2,\lambda_3,\ldots$ such that $\sum_{j \ge 2}j \lambda_j=k$ and
\begin{eqnarray*}
 B_j:=r^{-1}(r-1)^j+(-1)^j(1-r^{-1}),\end{eqnarray*}
and $\binom{u}{v}$ is the standard binomial coefficient ``$u$-choose-$v$''. It is easy to compute that $N_2=e(r^2-1),N_3=e^2(r-2)(r^2-1)$, $N_4=e^2(r^2-1)\left\{(e+3)r^2-6er+6e-3\right\}$, etc. We prove the following.

\begin{thm} \label{1:thm1}
(i). For $p=2$ or $p$ being an odd prime, under assumptions (\ref{2:e1})--(\ref{2:ca}) and (a)--(c), the code $\ca $ is a $q$-ary cyclic code of length $n=(r^2-1)/\delta$ and dimension $(2t+1)m$, with at most $(2t+1)$ non-zero weights, each of which is given by
    \begin{eqnarray} \label{2:wj} w_j=\frac{q-1}{q \delta} \cdot \left(r^2-(je-1)r\right), \, 0 \le j \le 2t. \end{eqnarray}
(ii). Let $\mu_j$ be the frequency of the weight $w_j$ for each $j$. Define $\vec{\mu}=(\mu_0,\mu_1,\ldots,\mu_{2t})^T$, and  $\vec{b}=(b_0,b_1,\ldots,b_{2t})^T$ where $b_i=r^{2t+1}N_i-\left(r^2-1\right)^i$. Then
    \[\vec{\mu}=\left(M^{(1)}_t\right)^{-1}\vec{b}. \]
Here $M_t^{(1)}=[m_{ij}]_{0 \le i,j \le 2t}$ is an invertible Vandermonde matrix whose entry is given by $m_{ij}=\left(jer-r-1\right)^i$.
\end{thm}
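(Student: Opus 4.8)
The plan is to follow the character-sum method standard for Niho-type codes. Write $\psi$ for the canonical additive character of $\gf(q)$ and set $f_{\vec a}(i)=\tr_{r/q}(a_0\gamma^{d_0 i})+\tr_{r^2/q}(\sum_{j=1}^t a_j\gamma^{d_j i})$, so that the Hamming weight of $c(\vec a)$ equals $n$ minus the number of indices $i$ with $f_{\vec a}(i)=0$. Expanding the indicator of $f_{\vec a}(i)=0$ through $\psi$ and isolating the trivial character gives
\[ \mathrm{wt}(c(\vec a))=\frac{(q-1)n}{q}-\frac1q\,T(\vec a),\qquad T(\vec a)=\sum_{y\in\gf(q)^*}\sum_{i=0}^{n-1}\psi\big(y\,f_{\vec a}(i)\big). \]
A direct computation with $w_j$ shows that statement (i) is equivalent to $S(\vec a):=\frac{\delta}{q-1}T(\vec a)\in\{jer-r-1:0\le j\le 2t\}$ for every $\vec a\neq\vec 0$ (while $S(\vec 0)=r^2-1$), the index $j$ being exactly that of the weight $w_j$. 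The dimension claim $(2t+1)m$ will follow from the distinctness of the cyclotomic cosets of the zeroes $\gamma^{-d_0},\dots,\gamma^{-d_t}$, the $a_0$-term contributing $m$ coordinates (via $\tr_{r/q}$) and each $a_j$-term contributing $2m$ coordinates (via $\tr_{r^2/q}$).

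The heart of statement (i) is the reduction of $T(\vec a)$ to the unit circle $U=\{z\in\gf(r^2)^*:z^{r+1}=1\}$. Because each $d_j\equiv 2f\pmod{r-1}$, writing $\gamma^{d_j i}$ through the factorisation $r^2-1=(r-1)(r+1)$ lets me separate the $\gf(r)^*$-part from the $U$-part; the coprimality condition (b) guarantees that the $\gf(r)^*$-variable runs over a full set and contributes a clean orthogonality factor, while condition (c) ensures (for odd $p$) that the single $\tr_{r/q}$-term for $a_0$ pairs correctly with the $r$-power conjugates coming from the $\tr_{r^2/q}$-terms. After this manipulation $S(\vec a)$ becomes $er\,\nu(\vec a)-r-1$, where $\nu(\vec a)$ counts the $z\in U$, grouped into cosets of the order-$e$ subgroup, at which an associated auxiliary polynomial $P_{\vec a}$ of degree at most $2t$ vanishes. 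Condition (a), $t<\frac{r+1}{2e}$, then forces $0\le\nu(\vec a)\le 2t$, which yields the at most $2t+1$ weights $w_j$ of (\ref{2:wj}). I expect this exponential-sum reduction—keeping track of the interplay of $e$, the $\gf(r)^*$-part, and the conjugate structure of the $a_0$-term—to be the main obstacle, precisely the \emph{subtle manipulation} promised in the abstract.

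For statement (ii) the plan is the power-moment method. Since $S$ takes the value $jer-r-1$ with frequency $\mu_j$ on the nonzero codewords and $S(\vec 0)=r^2-1$, for each $i$ with $0\le i\le 2t$ one has $\sum_{j=0}^{2t}(jer-r-1)^i\mu_j=\big(\sum_{\vec a}S(\vec a)^i\big)-(r^2-1)^i$, which is the $i$-th coordinate of the system $M_t^{(1)}\vec\mu=\vec b$ provided $\sum_{\vec a}S(\vec a)^i=r^{2t+1}N_i$. To establish this identity I would expand $S(\vec a)^i$ as an $i$-fold character sum and sum over $a_0\in\gf(r)$ and $a_1,\dots,a_t\in\gf(r^2)$; orthogonality collapses the $\vec a$-sum and leaves a count of $i$-tuples of points on $U$ subject to balance conditions, which I organise according to how the points coincide. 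A tuple whose coincidence pattern has $\lambda_j$ blocks of size $j$ contributes the factor $\binom{(r+1)/e}{\sum_j\lambda_j}$ (choice of distinct coset representatives on $U$), the arrangement $(\sum_j\lambda_j)!\prod_j 1/\lambda_j!$, and a local weight $B_j/j!$ per block, where $B_j=r^{-1}(r-1)^j+(-1)^j(1-r^{-1})$ is exactly the $j$-th moment produced by summing the trace condition over $\gf(r)$ (value $r-1$ with weight $1/r$, value $-1$ otherwise). Matching these contributions against (\ref{2:nr}) gives $\sum_{\vec a}S(\vec a)^i=r^{2t+1}N_i$, hence $\vec b$.

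Finally, the nodes $jer-r-1$, $0\le j\le 2t$, are pairwise distinct because $er\neq 0$, so $M_t^{(1)}=[(jer-r-1)^i]_{0\le i,j\le 2t}$ is an invertible Vandermonde matrix and $\vec\mu=(M_t^{(1)})^{-1}\vec b$, completing statement (ii).
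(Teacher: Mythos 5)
Your outline follows the same route as the paper's proof: the character-sum expression for the weight, reduction to the unit circle $U$, an auxiliary polynomial of degree at most $2t$ for statement (i), and power moments together with the combinatorial identity $\sum_{\vec a}S(\vec a)^i=r^{2t+1}N_i$ (equivalently, the paper's $M_k=(q-1)^kN_k$) for statement (ii). For $p=2$ your sketch matches the paper's argument step for step and would go through, modulo the deferred Vandermonde-type counting that both you and the paper outsource.

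The genuine gap is the case of odd $p$, and it sits exactly at the step you yourself flag as the main obstacle. Your reduction writes each $x\in\gf(r^2)^*$ as $x=yz$ with $y\in\gf(r)^*$, $z\in U$, ``through the factorisation $r^2-1=(r-1)(r+1)$''. This is only valid when $(r-1,r+1)=1$, i.e.\ when $p=2$. For odd $p$ one has $\gf(r)^*\cap U=\{\pm 1\}$, so the map $(y,z)\mapsto yz$ is $2$-to-$1$ onto a subgroup of index $2$ in $\gf(r^2)^*$: the decomposition you rely on does not exist. Moreover, the ``clean orthogonality factor'' you attribute to condition (b) requires $\lambda y^{2f}$ to sweep $\gf(r)^*$ uniformly, i.e.\ $\bigl(2f,\frac{r-1}{q-1}\bigr)=1$; when $m$ is even, $\frac{r-1}{q-1}\equiv m\pmod 2$ is even, so this fails no matter what (b) says. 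Your reading of condition (c) --- that it makes the $a_0$-term ``pair correctly with the $r$-power conjugates'' --- is not what (c) does. In the paper, (c) selects between two genuinely different decompositions: if $m$ is odd, then $\frac{r-1}{q-1}$ is odd, hence $\bigl(2f,\frac{r-1}{q-1}\bigr)=\bigl(f,\frac{r-1}{q-1}\bigr)=1$, and one replaces $U$ by the coset representatives $\Omega=\{1,\gamma,\dots,\gamma^{r}\}$; if instead $m$ and $h$ are both even, one writes $x=yz\epsilon$ with $\epsilon\in\{1,\xi\}$, $\xi=\gamma^{(r+1)/2}$ a fixed non-square, uses $\xi^{(r-1)h}=1$ (this is precisely where $2\mid h$ enters) to remove $\epsilon$ from the $U$-part, and uses that $\xi^{r+1}$ is a non-square in $\gf(r)^*$ so that $y^2\epsilon^{r+1}$ sweeps $\gf(r)^*$, which reduces the relevant exponent from $2f$ to $f$ and lets condition (b) apply. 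The identical two-case analysis is needed a second time in your moment computation for statement (ii), since the count behind $\sum_{\vec a}S(\vec a)^i=r^{2t+1}N_i$ again requires splitting off the $\gf(r)^*$-variable. As written, your argument establishes the theorem only for $p=2$.
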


By using computer algebra such as {\bf Mathematica}, Theorem \ref{1:thm1} can be used easily to compute the weight distribution of $\ca$ explicitly for any $t$, though the results are quite complicated to be written down even for $t =2$. When $t=1$ which is the most interesting, it is a 3-weight cyclic code, whose weight distribution can be described as follows.

\begin{cor} \label{cor:ca} Under assumptions (\ref{2:e1})--(\ref{2:ca}) and (a)--(c) for $t=1$, the code $\catwo$ is a 3-weight cyclic code of length $n=(r^2-1)/\delta$ and dimension $3m$. The weight distribution is given by Table \ref{Table1}.
\end{cor}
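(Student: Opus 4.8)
The plan is to obtain Corollary \ref{cor:ca} as the direct specialization of Theorem \ref{1:thm1} to the case $t=1$, so that essentially all the substance is already in hand and only a short explicit linear-algebra computation remains. First I would invoke Theorem \ref{1:thm1}(i) with $t=1$: the length $n=(r^2-1)/\delta$ and the dimension $(2t+1)m=3m$ are immediate, and formula (\ref{2:wj}) specialized to $0\le j\le 2$ yields the three candidate non-zero weights
\[
w_0=\frac{q-1}{q\delta}\bigl(r^2+r\bigr),\qquad
w_1=\frac{q-1}{q\delta}\bigl(r^2-(e-1)r\bigr),\qquad
w_2=\frac{q-1}{q\delta}\bigl(r^2-(2e-1)r\bigr).
\]
I would then observe that the three bracketed quantities have pairwise distinct coefficients of $r$ (namely $1$, $-(e-1)$, $-(2e-1)$), so since $e\ge 1$ and the prefactor is nonzero these three weights are genuinely distinct; confirming that all three occur with positive frequency will upgrade ``at most three weights'' to ``exactly three weights''.

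Next I would evaluate the right-hand vector of Theorem \ref{1:thm1}(ii) for $t=1$. Using the tabulated values $N_0=1$, $N_1=0$, $N_2=e(r^2-1)$ together with $b_i=r^{2t+1}N_i-(r^2-1)^i=r^3N_i-(r^2-1)^i$, one gets $b_0=r^3-1$, $b_1=-(r^2-1)$, and $b_2=(r^2-1)\bigl(er^3-(r^2-1)\bigr)$. With $t=1$ the matrix $M_1^{(1)}$ is the $3\times 3$ Vandermonde matrix on the three nodes $x_j=jer-r-1$, i.e.\ $x_0=-(r+1)$, $x_1=er-r-1$, $x_2=2er-r-1$. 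I would solve $M_1^{(1)}\vec{\mu}=\vec{b}$ by the standard Vandermonde (Lagrange) inversion, which is especially clean here because the node differences simplify to $x_1-x_0=er$, $x_2-x_1=er$, and $x_2-x_0=2er$. Reading off and simplifying $\mu_0,\mu_1,\mu_2$ gives the entries of Table \ref{Table1}; as a consistency check the first equation of the system forces $\mu_0+\mu_1+\mu_2=b_0=r^3-1$, the correct count of non-zero codewords, and I would verify each $\mu_j>0$ under hypotheses (a)--(c) to conclude that $\catwo$ is indeed a genuine $3$-weight code.

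There is no conceptual obstacle here, since Theorem \ref{1:thm1} performs all the heavy lifting; the only real work is bookkeeping. The main risk lies in the algebraic simplification of the Vandermonde solution into the closed forms recorded in Table \ref{Table1}—keeping the powers of $r$, the factors of $e$, and the denominators $q\delta$ straight—and in cleanly verifying positivity of the three frequencies from the parameter constraints.
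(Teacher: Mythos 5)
Your proposal is correct and follows exactly the paper's route: Corollary \ref{cor:ca} is nothing but Theorem \ref{1:thm1} specialized to $t=1$, with the frequencies obtained by inverting the $3\times 3$ Vandermonde system $M_1^{(1)}\vec{\mu}=\vec{b}$ built from $N_0=1$, $N_1=0$, $N_2=e(r^2-1)$ (the paper performs this step by computer algebra rather than by hand, so your Lagrange inversion with node gaps $er$, $er$, $2er$ is the same computation made explicit). One remark worth recording: carrying out that inversion yields
\[
\mu_0=\frac{2e^2r^3-2er^3+r^3-3er^2+r^2+2er-r+3e-2e^2-1}{2e^2},
\]
which passes your consistency check $\mu_0+\mu_1+\mu_2=r^3-1$ and matches Example \ref{1:ex1} (frequency $1800$ for the weight $68$), whereas the first entry of Table \ref{Table1} prints $-q+2eq$ where $-r+2er$ should stand; that is a typo in the table (the frequencies can depend only on $r$ and $e$, since they count solutions of an equation over $U$), so your computed value, not the printed one, is correct, while your $\mu_1$ and $\mu_2$ agree verbatim with the table and your positivity check under (a)--(c) (using $e<\frac{r+1}{2}$) closes the claim that the code has exactly three non-zero weights.
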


\begin{table}[ht]
\caption{The weight distribution of $\catwo$}\label{Table1}
\begin{center}{
\begin{tabular}{|l|l|}
  \hline
  Weight & Frequency  \\
  \hline
  \hline
$0$ & once \\
  \hline
  $\frac{q-1}{q \delta} \left(r^2+r\right)$ & $\frac{-1+3e - 2 e^2 - q + 2 e q + r^2 - 3 e r^2 + r^3 -
    2 e r^3 + 2 e^2 r^3}{2e^2}$\\
  \hline
  $\frac{q-1}{q \delta} \left(r^2-(e-1)r\right)$ & $\frac{1 - 2 e + r - e r - r^2 + 2 e r^2 - r^3 + e r^3}{e^2}$\\
  \hline
  $\frac{q-1}{q \delta} \left(r^2-(2e-1)r\right)$ & $\frac{-1 + e - r + r^2 - e r^2 + r^3}{2e^2}$\\
  \hline
\end{tabular}}
\end{center}
\end{table}

From what follows we present some interesting examples of cyclic codes from $\ca$ over $\gf(4)$ and $\gf(8)$ respectively which we find optimal or almost optimal by checking ``Bounds on the minimum distance of linear codes'' provided by the website http://www.codetables.de/. Note that these cyclic codes have only a few none-zero weights. Here we omit optimal cyclic codes which could be obtained from \cite{gegeng2,XLZD} and hence only consider the case that $(f,r-1) >1$.

\begin{exam} \label{1:ex1} Let $q=4,m=2,r=16,h=1,f=3$. Then $e=1,(f,r-1)=3$.
\begin{itemize}
\item[(1).] $t=1$: $(d_0,d_1)=(51,66)$. Both Theorem \ref{1:thm1} and numerical computation by {\bf Magma} show that $\mathcal{C}_{(51,66)}^{(1)}$ is a three-weight cyclic code with the weight enumerator
\[1+2040Y^{60}+255Y^{64}+1800Y^{68}.\]
This is a $[85,6,60]$ code over $\gf(4)$ which is optimal among linear codes.

\item[(2).] $t=2$: $(d_0,d_1,d_2)=(51,66,81)$. Both Theorem \ref{1:thm1} and numerical computation by {\bf Magma} show that $\mathcal{C}_{(51,66,81)}^{(1)}$ is a five-weight cyclic code with the weight enumerator
\[1+35700Y^{52}+30600Y^{56}+250920Y^{60}+377655Y^{64}+353700Y^{68}.\]
This is a $[85,10,52]$ code over $\gf(4)$. It is known that for optimal linear codes of length 85 and dimension 10 over $\gf(4)$, the minimal distance satisfies $52 \le d \le 56$.

\item[(3).] $t=3$: $(d_0,d_1,d_2,d_3)=(51,66,81,96)$. Both Theorem \ref{1:thm1} and numerical computation by {\bf Magma} show that $\mathcal{C}_{(51,66,81,96)}^{(1)}$ is a seven-weight cyclic code with the weight enumerator
$1+185640Y^{44}+464100Y^{48}+4641000Y^{52}+17646000Y^{56}+
54396600Y^{60}+101483115Y^{64}+89619000Y^{68}$. This is a $[85,14,44]$ code over $\gf(4)$. It is known that for optimal linear codes of length 85 and dimension 14 over $\gf(4)$, the minimal distance satisfies $48 \le d \le 53$.

\end{itemize}
\end{exam}

\begin{exam} \label{1:ex2} Let $q=8,m=1,r=8,h=1,f=7,r^2=64$. Then $e=1,(f,r-1)=7$.
\begin{itemize}
\item[(1).] $t=1$: $(d_0,d_1)=(63,70)$. Both Theorem \ref{1:thm1} and numerical computation by {\bf   Magma} show that $\mathcal{C}_{(63,70)}^{(1)}$ is a three-weight cyclic code with the weight enumerator
\[1+252Y^{7}+63Y^{8}+196Y^{9}.\]
This is a $[9,3,7]$ code over $\gf(8)$ which is optimal among linear codes.
\item[(2).] $t=2$: $(d_0,d_1,d_2)=(63,70,77)$. Both Theorem \ref{1:thm1} and numerical computation by {\bf Magma} show that $\mathcal{C}_{(63,70,77)}^{(1)}$ is a five-weight cyclic code with the weight enumerator
\[1+882Y^{5}+1764Y^{6}+7812Y^{7}+12411Y^8+9898Y^9.\]
This is a $[9,5,5]$ code over $\gf(8)$ which is optimal among linear codes.
\item[(3).] $t=3$: $(d_0,d_1,d_2,d_3)=(63,70,77,84)$. Both Theorem \ref{1:thm1} and numerical computation by {\bf Magma} show that $\mathcal{C}_{(63,70,77,84)}^{(1)}$ is a seven-weight cyclic code with the weight enumerator
\[1+588Y^{3}+4410Y^{4}+33516Y^{5}+154056Y^{6}+463428Y^7+810621Y^8+630532Y^9.\]
This is a $[9,7,3]$ code over $\gf(8)$ which is optimal among linear codes.
\end{itemize}
\end{exam}

Now we present a table of cyclic codes from $\ca$ over $\gf(3),\gf(9),\gf(5)$ and $\gf(7)$ respectively which we find optimal or almost optimal by checking ``Bounds on the minimum distance of linear codes'' provided by the website http://www.codetables.de/. For simplicity, only the parameters of the codes are listed. None of these cyclic codes can be obtained from \cite{gegeng2,XLZD}.

\begin{table}[ht]
\caption{Optimal or almost optimal cyclic codes from $\ca$}\label{Table:ex1}
\begin{center}{
\begin{tabular}{|l|l|c|c|c|}
  \hline
  $q$ & $r$ & $(t=,h=,f=)$ & code parameters & Optimal (?)\\
  \hline
  \hline
  $3$ & $27$ & $(1,2,1)$ & $[182,9,108]$ & $111 \le d \le 115$ is optimal \\
  \hline
  $9$ & $9$ & $(1,1,2)$ & $[20,3,16]$ & $16 \le d \le 17$ is optimal\\
  \hline
  $9$ & $9$ & $(1,1,4)$ & $[10,3,8]$ &Y\\
  \hline
  $9$ & $9$ & $(2,1,4)$ & $[10,5,6]$ &Y\\
  \hline
  $9$ & $9$ & $(3,1,4)$ & $[10,7,4]$ &Y\\
  \hline
  $9$ & $9$ & $(4,1,4)$ & $[10,9,2]$ &Y\\
  \hline
  $9$ & $9$ & $(1,2,8)$ & $[5,3,3]$ &Y\\
  \hline
  \hline
  $5$ & $5$ & $(1,1,1)$ & $[12,3,8]$ &Y\\
  \hline
  $5$ & $5$ & $(1,1,2)$ & $[6,3,4]$ &Y\\
  \hline
  $5$ & $5$ & $(2,1,2)$ & $[6,5,2]$ &Y\\
  \hline
  \hline
  $7$ & $7$ & $(1,1,2)$ & $[24,3,18]$ &$d=19$ is optimal\\
  \hline
  $7$ & $7$ & $(1,1,3)$ & $[8,3,6]$ &Y\\
  \hline
  $7$ & $7$ & $(2,1,3)$ & $[8,5,4]$ &Y\\
  \hline
  $7$ & $7$ & $(3,1,3)$ & $[8,7,2]$ &Y\\
  \hline
  $7$ & $7$ & $(1,2,3)$ & $[4,3,2]$ &Y\\
  \hline
\end{tabular}}
\end{center}
\end{table}


\subsection{Cyclic code $\cb$}
For any integers $h,f$ and $t \ge 1$, define
\begin{eqnarray} \label{2:e2} e=(h,r+1), \quad \delta=\left\{\begin{array}{ll}
\left(\frac{h+f}{2}(r-1)+f,r^2-1\right) & \mbox{ if } t=1\\
\left(\frac{h+f}{2}(r-1)+f,(r-1)e\right) & \mbox{ if } t \ge 2 \end{array} \right., \quad n=(r^2-1)/\delta. \end{eqnarray}
Assume that
\begin{itemize}
\item[(a').] $1 \le t \le \frac{r+1}{2e}$,

\item[(b').] if $p=2$, then $\left(f,\frac{r-1}{q-1}\right)=1$,

\item[(c').] if $p \ge 3$, then

\begin{itemize}
\item[(c1').] $h \equiv f \pmod{2}$ and  $\left(f,\frac{r-1}{q-1}\right)=1$, or

\item[(c2').] $h \equiv f \equiv 0 \pmod{2}$ and  $\left(\frac{f}{2},\frac{r-1}{q-1}\right)=1$.

\end{itemize}

\end{itemize}
Let $\widetilde{d}_1,\ldots,\widetilde{d}_t$ be integers such that
\begin{eqnarray} \label{2:db}
\widetilde{d}_j \equiv  \left(j \cdot h+\frac{f-h}{2}\right)(r-1)+f \pmod{r^2-1}, \quad 1 \le j \le t.
\end{eqnarray}
Here if $p=2$, the number $\frac{1}{2}$ shall be interpreted as an integer which is the multiplicative inverse of $2 \pmod{r-1}$. It can be seen that $(\widetilde{d}_1,\ldots,\widetilde{d}_t,r^2-1)=\delta$. The $q$-ary cyclic code $\cb$ of length $n$ consists of elements $\widetilde{c}(\vec{a})$ given by
\begin{equation}\label{2:cb}
\begin{array}{l}
\widetilde{c}(\vec{a})=\left(\tr_{r^2/q}\left(\sum_{j=1}^t a_j \gamma^{\widetilde{d}_j i}  \right)\right)_{i=0}^{n-1},
\end{array}\end{equation}
where $\vec{a}=(a_1,\ldots,a_t)$ for any $a_1,\cdots,a_{t}\in \gf(r^2)$. Note that the dual of $\cb$ has the $t$ zeroes $\gamma^{-\widetilde{d}_1},\cdots,\gamma^{-\widetilde{d}_t}$. Here $\widetilde{d}_j \equiv f \pmod{r-1} \, \forall j$, the ${\gamma}^{-\widetilde{d}_j}$'s are call the ``generalized Niho type zeroes''.

We remark that a similar $\cb$ over $\gf(p)$ was constructed in \cite{XLZD} under the condition $(f,r-1)=1$ (see \cite[Theorem 1]{XLZD}). Clearly the conditions (b')(c') are more general and provide more flexible parameters. The weight distribution of $\cb$ can be described as follows.

\begin{thm} \label{1:thm2} (i). For $p=2$ or $p$ being an odd prime, under assumptions (\ref{2:e2})--(\ref{2:cb}) and (a')--(c'), the code $\cb $ is a $q$-ary cyclic code of length $n=(r^2-1)/\delta$ and dimension $2tm$, with at most $2t$ non-zero weights, each of which is given by
    \[\widetilde{w}_j=\frac{q-1}{q\delta} \cdot \left(r^2-(je-1)r\right), \, 0 \le j \le 2t-1. \]
(ii). Let $\widetilde{\mu}_j$ be the frequency of the weight $\widetilde{w}_j$ for each $j$. Define $\vec{\widetilde{\mu}}=(\widetilde{\mu}_0,\widetilde{\mu}_1,\ldots,\widetilde{\mu}_{2t-1})^T$, and $\vec{\widetilde{b}}=(\widetilde{b}_0,\widetilde{b}_1,\ldots,\widetilde{b}_{2t-1})^T$ where $\widetilde{b}_i=r^{2t}N_i-\left(r^2-1\right)^i$. Then
    \[\vec{\widetilde{\mu}}=\left(M^{(2)}_t\right)^{-1}\vec{\widetilde{b}}. \]
Here $M_t^{(2)}=[m_{ij}]_{0 \le i,j \le 2t-1}$ is a $2t \times 2t$ Vandermonde matrix whose entry is given by $m_{ij}:=\left(jer-r-1\right)^i$.
\end{thm}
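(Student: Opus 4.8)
The plan is to compute the weight distribution of $\cb$ by reducing it to a Gauss-sum / exponential-sum evaluation and then solving a linear system via a Vandermonde matrix, in close parallel to the treatment of $\ca$ in Theorem~\ref{1:thm1}. First I would recall the standard approach for weights of trace codes: the Hamming weight of $\widetilde{c}(\vec{a})$ equals $n$ minus the number of coordinates $i$ where the trace vanishes, and this count can be written using the additive characters of $\gf(q)$. Writing $\psi$ for the canonical additive character of $\gf(q)$, the weight of $\widetilde{c}(\vec{a})$ is
\[
\widetilde{w}(\vec{a}) = \frac{q-1}{q}\left(n - \frac{1}{q-1}\sum_{\lambda \in \gf(q)^*}\sum_{i=0}^{n-1}\psi\!\left(\lambda\,\tr_{r^2/q}\!\Big(\sum_{j=1}^t a_j \gamma^{\widetilde{d}_j i}\Big)\right)\right).
\]
The inner sum over $i$ runs over the cyclic subgroup generated by $\gamma^{\delta}$ of order $n$, and after clearing $\lambda$ into the coefficients $a_j$ and using the transitivity of trace, the key object becomes a sum of the shape $S(\vec{a}) = \sum_{x} \psi_{r^2}\!\big(\sum_j a_j x^{\widetilde{d}_j}\big)$ over the relevant multiplicative subgroup, where $\psi_{r^2} = \psi\circ\tr_{r^2/q}$. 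The Niho structure $\widetilde{d}_j \equiv f \pmod{r-1}$ is exactly what lets one parametrize $x = y u$ with $u^{r+1}=1$ and reduce these sums to manageable pieces; this is the ``subtle manipulation of exponential sums'' advertised in the abstract.

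Next I would establish that the possible nonzero weights are precisely the $\widetilde{w}_j = \frac{q-1}{q\delta}(r^2 - (je-1)r)$ for $0 \le j \le 2t-1$, which is Statement~(i) and is assumed proved by the time we reach (ii). The heart of (ii) is a moment / power-sum argument. For each integer power $k \ge 0$ I would compute the $k$-th power-moment $\sum_{\vec{a}} \widetilde{w}(\vec{a})^{\,k}$ of the weights over all $\vec{a} \in \gf(r^2)^t$, equivalently the moment of the exponential sum $S(\vec{a})$. The first few moments are controlled by the quantities $N_k$ defined in~(\ref{2:nr}): one shows that the $k$-th moment of $S(\vec{a})$ over all coefficient vectors equals $r^{2t} N_k$ (the factor $r^{2t}$ being the number of codewords, i.e.\ $|\gf(r^2)^t|$), while the corresponding ``trivial'' contribution subtracted off is $(r^2-1)^k$. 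Relating the linear character sum to the weight $\widetilde w_j$ through the affine substitution $S \leftrightarrow jer - r - 1$, the $i$-th moment identity becomes exactly $\sum_{j=0}^{2t-1} (jer-r-1)^i\,\widetilde{\mu}_j = r^{2t}N_i - (r^2-1)^i = \widetilde{b}_i$, for $0 \le i \le 2t-1$. In matrix form this is $M_t^{(2)}\,\vec{\widetilde{\mu}} = \vec{\widetilde{b}}$ with $M_t^{(2)} = [(jer-r-1)^i]$, and since the nodes $jer - r - 1$ for $j=0,\ldots,2t-1$ are distinct (they form an arithmetic progression with common difference $er \ne 0$), the Vandermonde matrix is invertible and $\vec{\widetilde{\mu}} = (M_t^{(2)})^{-1}\vec{\widetilde{b}}$ follows.

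The main obstacle, and where almost all the work lies, is proving that the first $2t$ moments of the exponential sum are governed by the combinatorial quantities $N_k$, i.e.\ that $\sum_{\vec a} S(\vec a)^{\,k} = r^{2t} N_k$ for $0 \le k \le 2t-1$. Expanding $S(\vec a)^k$ and summing over $\vec a$ forces the $k$ chosen exponents (among the $\widetilde d_j$) to satisfy an additive cancellation condition; tracking these solutions is what produces the generating-function / multinomial structure encoded in~(\ref{2:nr}), with the $B_j$ arising as the normalized moments of a single Niho-type summand over the norm-one subgroup. Controlling this carefully under the relaxed hypotheses (b'),(c') --- especially the odd-characteristic split (c1'),(c2') where the parity of $h$ and $f$ and the precise value of $\delta$ enter, and the $t=1$ special case where $\delta$ is defined differently in~(\ref{2:e2}) --- is exactly the point where the present proof departs from and is ``much more involved'' than the one in~\cite{XLZD}. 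I expect this moment computation to be deferred to the dedicated sections (\ref{sec-3}--\ref{sec-42}) for $p=2$ and $p \ge 3$ separately; once it is in hand, the passage to $\vec{\widetilde{\mu}}$ via Vandermonde inversion is purely formal.
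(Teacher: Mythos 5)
Your proposal is correct and follows essentially the same route as the paper: express the weight of a codeword of $\cb$ via additive character sums, exploit the Niho structure through the decomposition $x=yu$ with $u^{r+1}=1$ (where conditions (b'),(c') enter), compute the first $2t$ power moments of the exponential sum, identify them with the combinatorial quantities $N_k$ of (\ref{2:nr}) via the solution count of a Vandermonde-type system, and invert the resulting $2t\times 2t$ Vandermonde matrix $M_t^{(2)}$. This matches the paper's own treatment, which likewise reduces Theorem \ref{1:thm2} to the argument given for Theorem \ref{1:thm1} (Sections \ref{sec-3}--\ref{sec-42}) together with the combinatorial lemma $M_k=(q-1)^kN_k$, deferring the counting details to \cite{XLZD}.
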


\begin{cor} \label{cor:cb} (i). Under assumptions (\ref{2:e2})--(\ref{2:cb}) and (a')--(c') for $t=1$, the code $\cbone$ is a $q$-ary cyclic code of length $n=(r^2-1)/\delta$ and dimension $2m$ with most two non-zero weights. The weight distribution is given by Table \ref{Table2}. Note that it is a 1-weight code if and only if $e=1$.

(ii). Under assumptions (\ref{2:e2})--(\ref{2:cb}) and (a')--(c') for $t=2$, the code $\cbtwo$ is a 4-weight cyclic code of length $n=(r^2-1)/\delta$ and dimension $4m$. The weight distribution is given by Table \ref{Table3}.
\end{cor}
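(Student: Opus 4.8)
The plan is to derive Corollary \ref{cor:cb} directly as the specializations $t=1$ and $t=2$ of Theorem \ref{1:thm2}, whose full strength I am free to invoke. For part (i), Theorem \ref{1:thm2} with $t=1$ gives a code of length $n=(r^2-1)/\delta$ and dimension $2m$ with at most two non-zero weights, namely $\widetilde{w}_0$ and $\widetilde{w}_1$ as given by the formula $\widetilde{w}_j=\frac{q-1}{q\delta}\left(r^2-(je-1)r\right)$. The two frequencies $(\widetilde{\mu}_0,\widetilde{\mu}_1)^T$ are obtained by inverting the $2\times 2$ Vandermonde matrix $M_1^{(2)}=[m_{ij}]_{0\le i,j\le 1}$ with $m_{ij}=(jer-r-1)^i$, applied to the vector $\vec{\widetilde{b}}=(\widetilde{b}_0,\widetilde{b}_1)^T$ with $\widetilde{b}_i=r^{2}N_i-(r^2-1)^i$. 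Using the stated initial values $N_0=1$, $N_1=0$, I would compute $\widetilde{b}_0=r^2\cdot 1-1=r^2-1$ and $\widetilde{b}_1=r^2\cdot 0-(r^2-1)=-(r^2-1)$.

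The concrete $2\times 2$ inversion is routine: with $M_1^{(2)}=\begin{pmatrix} 1 & 1 \\ -(r+1) & er-r-1 \end{pmatrix}$ one has $\det M_1^{(2)}=er$, and solving $M_1^{(2)}\vec{\widetilde{\mu}}=\vec{\widetilde{b}}$ yields explicit rational expressions for $\widetilde{\mu}_0$ and $\widetilde{\mu}_1$ in terms of $r$ and $e$. I would simplify these and record them as the entries of Table \ref{Table2}. The claim that $\cbone$ is a 1-weight code if and only if $e=1$ should then fall out immediately: when $e=1$ the two weights $\widetilde{w}_0$ and $\widetilde{w}_1$ are $\frac{q-1}{q\delta}(r^2+r)$ and $\frac{q-1}{q\delta}\,r^2$, which are distinct, so a 1-weight code forces one of $\widetilde{\mu}_0,\widetilde{\mu}_1$ to vanish; I expect that setting $e=1$ makes exactly the $j=1$ weight (or the $j=0$ weight) have frequency $n(q-1)$ accounting for all nonzero codewords while the other frequency is $0$. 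Conversely, for $e\ge 2$ both frequencies are positive, giving a genuine 2-weight code. The small verification here is to check that the single surviving weight indeed has the correct frequency $n(q-1)=q^{2m}-1$, which is a consistency check against the total codeword count.

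For part (ii), I would repeat the same mechanism with $t=2$: Theorem \ref{1:thm2} gives dimension $4m$ and at most $2t=4$ non-zero weights $\widetilde{w}_0,\widetilde{w}_1,\widetilde{w}_2,\widetilde{w}_3$, with frequency vector $\vec{\widetilde{\mu}}=(M_2^{(2)})^{-1}\vec{\widetilde{b}}$ for the $4\times 4$ Vandermonde matrix $M_2^{(2)}=[(jer-r-1)^i]_{0\le i,j\le 3}$ and $\widetilde{b}_i=r^{4}N_i-(r^2-1)^i$. Here I would substitute the precomputed values $N_0=1$, $N_1=0$, $N_2=e(r^2-1)$, $N_3=e^2(r-2)(r^2-1)$ (all supplied in the text preceding Theorem \ref{1:thm1}) to obtain $\vec{\widetilde{b}}$ explicitly, then invert $M_2^{(2)}$ and read off the four frequencies for Table \ref{Table3}. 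The main obstacle, such as it is, is the bookkeeping of the $4\times 4$ Vandermonde inversion and the ensuing algebraic simplification rather than any conceptual difficulty; I would handle it via the standard closed form for the inverse of a Vandermonde matrix (with nodes $x_j=jer-r-1$) or simply by a computer algebra computation, which the authors already rely on elsewhere in the paper. To confirm the assertion that this is genuinely a 4-weight code (all four $\widetilde{\mu}_j>0$), I would finally check that each resulting frequency is a positive integer under the standing hypotheses (a')--(c'), which pins down that no weight collapses or vanishes in the $t=2$ regime.
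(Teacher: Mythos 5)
Your proposal is correct and is essentially the paper's own route: the corollary is obtained exactly by specializing Theorem \ref{1:thm2} to $t=1$ and $t=2$, inverting the $2\times 2$ and $4\times 4$ Vandermonde systems $M_t^{(2)}\vec{\widetilde{\mu}}=\vec{\widetilde{b}}$ with $N_0=1,N_1=0,N_2=e(r^2-1),N_3=e^2(r-2)(r^2-1)$ (the paper does this simplification by computer algebra), and your $t=1$ computation $\widetilde{\mu}_1=(r^2-1)/e$, $\widetilde{\mu}_0=(e-1)(r^2-1)/e$ reproduces Table \ref{Table2} and the ``1-weight iff $e=1$'' claim. One peripheral slip: your consistency check should just be that the surviving frequency equals $q^{2m}-1=r^2-1$; the identity $n(q-1)=q^{2m}-1$ you wrote holds only when $\delta=q-1$, which is not guaranteed by the hypotheses.
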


Since $\cbone$ is irreducible, (i) of Corollary \ref{cor:cb} should be known to researchers in the field. We collect the result here only for the sake of completeness. However, the dual of $\cbtwo$ has two zeroes, and (ii) of Corollary \ref{cor:cb} is new.

\begin{table}[ht]
\caption{The weight distribution of $\cbone$} \label{Table2}
\begin{center}{
\begin{tabular}{|l|l|}
  \hline
  Weight & Frequency  \\
  \hline
  \hline
$0$ & once \\
  \hline
  $\frac{q-1}{q \delta} \left(r^2+r\right)$ & $\frac{(e-1)(r^2-1)}{e}$\\
  \hline
  $\frac{q-1}{q \delta} \left(r^2-(e-1)r\right)$ & $\frac{r^2-1}{e}$\\
  \hline
\end{tabular}}
\end{center}
\end{table}

\begin{table}[ht]
\caption{The weight distribution of $\cbtwo$}\label{Table3}
\begin{center}{
\begin{tabular}{|l|l|}
  \hline
  Weight & Frequency  \\
  \hline
  \hline
$0$ & once \\
  \hline
  $\frac{q-1}{q \delta} \left(r^2+r\right)$ & $\frac{1 - 6 e + 11 e^2 - 6 e^3 + 2 r - 9 e r + 9 e^2 r +
    3 e r^2 - 5 e^2 r^2 - 2 r^3 + 9 e r^3 - 9 e^2 r^3 -
    r^4 + 3 e r^4 - 6 e^2 r^4 + 6 e^3 r^4}{6e^3}$\\
  \hline
  $\frac{q-1}{q \delta} \left(r^2-(e-1)r\right)$ & $\frac{-1 + 5 e - 6 e^2 - 2 r + 7 e r - 4 e^2 r - 3 e r^2 +
    4 e^2 r^2 + 2 r^3 - 7 e r^3 + 4 e^2 r^3 + r^4 -
    2 e r^4 + 2 e^2 r^4}{2e^3}$\\
  \hline
  $\frac{q-1}{q \delta} \left(r^2-(2e-1)r\right)$ & $\frac{1 - 4 e + 3 e^2 + 2 r - 5 e r + e^2 r + 3 e r^2 -
    3 e^2 r^2 - 2 r^3 + 5 e r^3 - e^2 r^3 - r^4 + e r^4}{2e^3}$\\
  \hline
  $\frac{q-1}{q \delta} \left(r^2-(3e-1)r\right)$ & $\frac{-1 + 3 e - 2 e^2 - 2 r + 3 e r - 3 e r^2 + 2 e^2 r^2 +
    2 r^3 - 3 e r^3 + r^4}{6e^3}$\\
  \hline
\end{tabular}}
\end{center}
\end{table}

From what follows we present some interesting examples of cyclic codes from $\cb$ over $\gf(4)$ and $\gf(8)$ respectively which we find optimal or almost optimal by checking ``Bounds on the minimum distance of linear codes'' provided by the website http://www.codetables.de/. Note that these cyclic codes have only a few none-zero weights. Here we omit optimal cyclic codes which could be obtained from \cite{gegeng2,XLZD} and hence only consider the case that $(f,r-1) >1$.


\begin{exam} \label{2:ex1}Let $q=4,m=2,r=16,h=2,f=6$. Then $e=1,(f,r-1)=3$.
\begin{itemize}
\item[(1).] $t=1$: $(\widetilde{d}_1)=(66)$. Both Theorem \ref{1:thm2} and numerical computation by {\bf Magma} show that $\mathcal{C}_{(66)}^{(1)}$ is a one-weight cyclic code with the weight enumerator
\[1+255Y^{64}.\]
This is a $[85,4,64]$ code over $\gf(4)$ which is optimal among linear codes.

\item[(2).] $t=2$: $(\widetilde{d}_1,\widetilde{d}_2)=(66,96)$. Both Theorem \ref{1:thm2} and numerical computation by {\bf Magma} show that $\mathcal{C}_{(66,96)}^{(1)}$ is a four-weight cyclic code with the weight enumerator
\[1+10200Y^{56}+4080Y^{60}+30855Y^{64}+20400Y^{68}.\]
This is a $[85,8,56]$ code over $\gf(4)$. It is known that for optimal linear codes of length 85 and dimension 8 over $\gf(4)$, the minimal distance satisfies $56 \le d \le 59$.

\item[(3).] $t=3$: $(\widetilde{d}_1,\widetilde{d}_2,\widetilde{d}_3)=(66,96,126)$. Both Theorem \ref{1:thm2} and numerical computation by {\bf Magma} show that $\mathcal{C}_{(66,96,126)}^{(1)}$ is a six-weight cyclic code with the weight enumerator
$1+92820Y^{48}+142800Y^{52}+1285200Y^{56}+3272160Y^{60}+
6390555Y^{64}+5593680Y^{68}$. This is a $[85,12,48]$ code over $\gf(4)$. It is known that for optimal linear codes of length 85 and dimension 12 over $\gf(4)$, the minimal distance satisfies $48 \le d \le 55$.

\end{itemize}
\end{exam}

\begin{exam} \label{2:ex2} Let $q=8,m=1,r=8,h=2,f=14,r^2=64$. Then $e=1,(f,r-1)=7$.
\begin{itemize}
\item[(1).] $t=1$: $(\widetilde{d}_1)=(70)$. Both Theorem \ref{1:thm1} and numerical computation by {\bf   Magma} show that $\mathcal{C}_{(70)}^{(2)}$ is a one-weight cyclic code with the weight enumerator
\[1+63Y^{8}.\]
This is a $[9,2,8]$ code over $\gf(8)$ which is optimal among linear codes.

\item[(2).] $t=2$: $(\widetilde{d}_1,\widetilde{d}_2)=(70,84)$. Both Theorem \ref{1:thm2} and numerical computation by {\bf Magma} show that $\mathcal{C}_{(70,84)}^{(2)}$ is a four-weight cyclic code with the weight enumerator
\[1+588Y^{6}+504Y^{7}+1827Y^{8}+1176Y^9.\]
This is a $[9,4,6]$ code over $\gf(8)$ which is optimal among linear codes.

\item[(3).] $t=3$: $(\widetilde{d}_1,\widetilde{d}_2,\widetilde{d}_3)=(70,84,98)$. Both Theorem \ref{1:thm2} and numerical computation by {\bf Magma} show that $\mathcal{C}_{(70,84,98)}^{(2)}$ is a six-weight cyclic code with the weight enumerator
\[1+882Y^{4}+3528Y^{5}+19992Y^{6}+57456Y^{7}+101493Y^8+78792Y^9.\]
This is a $[9,6,4]$ code over $\gf(8)$ which is optimal among linear codes.
\end{itemize}
\end{exam}

Now we present a table of cyclic codes from $\cb$ over $\gf(3),\gf(9),\gf(5)$ and $\gf(7)$ respectively which we find optimal or almost optimal by checking ``Bounds on the minimum distance of linear codes'' from the website http://www.codetables.de/. For simplicity, only the parameters of the code are listed. Note that none of these cyclic codes can be obtained from \cite{gegeng2,XLZD}.

\begin{table}[ht]
\caption{Optimal or almost optimal cyclic codes from $\cb$}\label{Table:ex2}
\begin{center}{
\begin{tabular}{|l|l|c|c|c|}
  \hline
  $q$ & $r$ & $(t=,h=,f=)$ & code parameters & Optimal (?)\\
  \hline
  \hline
  $3$ & $27$ & $(1,4,2)$ & $[91,6,54]$ & $57 \le d \le 58$ is optimal \\
  \hline
  \hline
  $9$ & $9$ & $(1,2,8)$ & $[5,2,4]$ & Y \\
  \hline
  $9$ & $9$ & $(2,2,8)$ & $[5,4,2]$ & Y \\
  \hline
  \hline
  $7$ & $7$ & $(1,1,3)$ & $[16,2,14]$ & Y \\
  \hline
  $7$ & $7$ & $(2,1,3)$ & $[16,4,10]$ & $d=11$ is optimal \\
  \hline
\end{tabular}}
\end{center}
\end{table}

\section{Preliminaries}\label{sec-00}

Following notation from Section \ref{sec-2}, let $p$ be a prime, $q=p^l, r=q^m$ and let $\gamma$ be a primitive element of $\gf(r^2)$. Define $\gf(r^2)^*:=\gf(r^2)-\{0\}$. For any integer $d$, let $h_d(x) \in \gf(q)[x]$ be the minimal polynomial of $\gamma^{-d}$ over $\gf(q)$. We first prove the following lemma, the proof of which is similar to \cite[Lemma 5]{XLZD}.

\begin{lemma} \label{3:lem33} Suppose $\left(\triangle,\frac{r-1}{q-1}\right)=1$ or $\left(\frac{\triangle}{2},\frac{r-1}{q-1}\right)=1$ if $\triangle$ is even.
\begin{itemize}
\item[(i).] If $d=s(r-1)+\triangle$, then $\deg h_d(x)=\left\{\begin{array}{lll}
m &:& \mbox{ if } \triangle \equiv 2s \pmod{r+1},\\
2m &:& \mbox{ if } \triangle \not \equiv 2s \pmod{r+1}. \end{array}\right.$

\item[(ii).] If $d=s(r-1)+\triangle$ and $d'=s'(r-1)+\triangle$, then
$h_{d}(x) = h_{d'}(x)$ if and only if $s \equiv s' \pmod{r+1}$ or $s+s' \equiv \triangle \pmod{r+1}$.
\end{itemize}

\end{lemma}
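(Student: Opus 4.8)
The plan is to recast the statement in terms of $q$-cyclotomic cosets and the norm map of $\gf(r^2)/\gf(r)$. Throughout I set $\beta=\gamma^{-d}$ and recall that $\deg h_d(x)$ equals the cardinality of the Frobenius orbit $\{\beta,\beta^q,\beta^{q^2},\ldots\}$, so in particular $\deg h_d(x)\mid 2m$. I introduce $u=\gamma^{r+1}$, a generator of $\gf(r)^*$ (order $r-1$), and $v=\gamma^{r-1}$, a generator of the norm-one subgroup $U=\{x\in\gf(r^2)^*:x^{r+1}=1\}$ (order $r+1$). Writing $d=s(r-1)+\triangle$ gives the two identities $\beta=v^{-s}\gamma^{-\triangle}$ and $N(\beta):=\beta^{r+1}=\gamma^{-d(r+1)}=u^{-\triangle}$, so $N(\beta)$ has order $(r-1)/\gcd(\triangle,r-1)$.

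The first key step is a divisibility lemma: under the hypothesis, $N(\beta)$ lies in no proper subfield $\gf(q^{m'})$ with $m'\mid m$ and $m'<m$. Indeed, if $\mathrm{ord}(N(\beta))=(r-1)/\gcd(\triangle,r-1)$ divided $q^{m'}-1$, then $\frac{r-1}{q^{m'}-1}$ would divide $\gcd(\triangle,r-1)$ and hence $\triangle$; but $\frac{r-1}{q^{m'}-1}=\frac{1+q+\cdots+q^{m-1}}{1+q+\cdots+q^{m'-1}}$ is an integer exceeding $1$ that divides $\frac{r-1}{q-1}$, contradicting $\gcd(\triangle,\frac{r-1}{q-1})=1$. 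The even variant $\gcd(\triangle/2,\frac{r-1}{q-1})=1$ is handled the same way after checking that $\frac12\cdot\frac{r-1}{q^{m'}-1}$ is still a nontrivial divisor of $\frac{r-1}{q-1}$ in the only case where $\frac{r-1}{q^{m'}-1}$ is even; I expect this factor-of-two bookkeeping to be the main technical nuisance, and it is precisely what forces the two-pronged hypothesis.

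For (i), a direct computation using $v^r=v^{-1}$ and $\gamma^{-\triangle r}=v^{-\triangle}\gamma^{-\triangle}$ gives $\beta^r=v^{2s-\triangle}\beta$, so $\beta\in\gf(r)$ iff $v^{2s-\triangle}=1$ iff $\triangle\equiv 2s\pmod{r+1}$; this already produces the dichotomy. When $\triangle\equiv 2s$, we have $\deg h_d\mid m$, and if it were a proper divisor $m'$ then $\beta\in\gf(q^{m'})$ would force $N(\beta)\in\gf(q^{m'})$, contradicting the key lemma, so $\deg h_d=m$. When $\triangle\not\equiv 2s$, then $\deg h_d\nmid m$, so (as $\deg h_d\mid 2m$) it equals $2g$ with $g\mid m$ and $m/g$ odd; if $g<m$, then $\beta\in\gf(q^{2g})$ together with $m\equiv g\pmod{2g}$ yields $N(\beta)=\beta^{1+q^{m}}=\beta^{1+q^{g}}\in\gf(q^{g})$, again contradicting the key lemma, whence $g=m$ and $\deg h_d=2m$.

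For (ii), fix $\triangle$ and note that as $s$ runs over $\mathbb{Z}/(r+1)$ the elements $\gamma^{-d}=v^{-s}\gamma^{-\triangle}$ fill out the single coset $\gamma^{-\triangle}U$, on which $s\bmod(r+1)$ is a faithful coordinate; thus $h_d=h_{d'}$ iff $\gamma^{-d'}$ is one of the Frobenius conjugates $\beta^{q^i}$ that lie in $\gamma^{-\triangle}U$. Reducing the membership condition $\gamma^{-dq^i}\in\gamma^{-\triangle}U$ modulo $r-1$ gives $\triangle(q^i-1)\equiv 0\pmod{r-1}$, and the same divisibility argument as in the key lemma shows this holds iff $m\mid i$. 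Hence the only conjugates of $\beta$ in the coset are $\beta$ itself (from $i=0$) and, when $\deg h_d=2m$, also $\beta^r=v^{s-\triangle}\gamma^{-\triangle}$ (from $i=m$). Matching the coordinate $s'$ of $\gamma^{-d'}$ against these two elements yields exactly $s'\equiv s$ or $-s'\equiv s-\triangle$, i.e. $s\equiv s'\pmod{r+1}$ or $s+s'\equiv\triangle\pmod{r+1}$, and one checks these two conditions collapse consistently in the degree-$m$ case. The main obstacle throughout is the key divisibility lemma with its even refinement; everything else is bookkeeping with the two subgroups $\langle u\rangle$ and $\langle v\rangle$.
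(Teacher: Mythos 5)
Your proposal is correct, but it is organized around a different key lemma than the paper's proof, so a comparison is worthwhile. The paper works entirely with exponent congruences: $\deg h_d$ is the least $k$ with $dq^k\equiv d\pmod{r^2-1}$; reducing mod $r-1$ gives $(q^m-1)\mid\triangle(q^k-1)$, the hypothesis on $\triangle$ yields $(q^m-1)\mid 2(q^k-1)$, and then the identity $\gcd(q^m-1,q^k-1)=q^{(m,k)}-1$ together with a size estimate ruling out the factor $2$ forces $m\mid k$ in one stroke, so $k\in\{m,2m\}$; the mod-$(r+1)$ computation then settles both the dichotomy in (i) and the two alternatives in (ii). You instead isolate the hypothesis into a structural statement --- the norm $N(\beta)=u^{-\triangle}$ lies in no proper subfield $\gf(q^{m'})$, $m'\mid m$ --- and pin down the degree by subfield-containment arguments, including the nice observation that if $\deg h_d=2g$ with $m/g$ odd and $g<m$ then $N(\beta)=\beta^{1+q^g}\in\gf(q^g)$. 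The underlying number theory is the same in both proofs (your divisibility lemma is exactly the paper's gcd computation, with the same factor-of-two bookkeeping: you rule out $\frac{q^m-1}{q^{m'}-1}=2$ by the bound $\frac{q^m-1}{q^{m'}-1}\ge 1+q\ge 3$, where the paper uses $q^m-1=2(q^{\nu}-1)\le 2(q^{m/2}-1)$), and your part (ii) --- reducing the coset-membership condition mod $r-1$ to get $m\mid i$, then matching coordinates in $\gamma^{-\triangle}U$ --- is essentially identical to the paper's. What your packaging buys is conceptual clarity: it makes visible exactly why the hypothesis concerns $\frac{r-1}{q-1}$ (it is precisely what keeps the norm out of proper subfields), at the cost of a two-case analysis in (i) that the paper's direct derivation of $m\mid k$ avoids.
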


\begin{proof}
(i). $\deg h_d(x)$ is the least positive integer $k$, $1 \le k \le 2m$ such that
$d q^k \equiv d \pmod{r^2-1}$. Since $d \equiv \triangle \pmod{r-1}$, we have $(q^m-1)|\triangle(q^k-1)$. Dividing $q-1$ on both sides, we find that $(q^m-1)|2(q^k-1)$. Let $\nu=(m,k)$ and $\lambda=\frac{q^m-1}{q^{\nu}-1}$. Then $\left(\lambda,\frac{q^k-1}{q^{\nu}-1}\right)=1$, and we have $\lambda|2$. If $\lambda=2$, then $m \ge 2$ and $\nu<m$, hence $\nu \le \frac{m}{2}$. We have $q^m-1=2(q^{\nu}-1) \le 2(q^{m/2}-1)$. This implies that $q \le q^{m/2}<2$, contradiction. So we must have $\lambda=\frac{q^m-1}{q^{\nu}-1}=1$, that is, $\nu=m$, hence $k=m$ or $2m$.

If $k=m$, this is equivalent to $d(r-1) \equiv 0 \pmod{r^2-1}$, that is $d \equiv 0 \pmod{r+1}$, and hence $(r+1)|(\triangle -2s)$. If $(r+1) \nmid (\triangle-2s)$, we must have $k=2m$.

(ii). $h_{d}(x) = h_{d'}(x)$ if and only if there exists an integer $k$, $1 \le k \le 2m$ such that $dq^k \equiv d' \pmod{r^2-1}$. Reducing the equation modulo $r-1$, by similar argument we find that $k=m$ or $2m$. If $k=2m$, then obviously $s \equiv s' \pmod{r+1}$. Otherwise $k=m$, we have $\left(s(r-1)+\triangle\right)r \equiv s'(r-1)+\triangle \pmod{r^2-1}$. This is equivalent to $\triangle \equiv s+s' \pmod{r+1}$ by simple computation. This completes the proof of Lemma \ref{3:lem33}.
\end{proof}

From Lemma \ref{3:lem33} we immediately obtain the following.

\begin{lemma} \label{3:lem2} (1). For $\ca$, let assumptions be as in Theorem \ref{1:thm1}. Then
\begin{itemize}
\item[(i).] $\deg h_{d_0}(x)=m$ and $\deg h_{d_i}(x)=2m, \forall \, 1 \le i \le t$.

\item[(ii).] $h_{d_i}(x) \ne h_{d_j}(x)$ for any $0 \le i \ne j \le t$.
\end{itemize}

\noindent (2). For $\cb$, let assumptions be as in Theorem \ref{1:thm2}. Then
\begin{itemize}
\item[(i).] $\deg h_{\widetilde{d}_i}(x)=2m, \forall \, 1 \le i \le t$.

\item[(ii).] $h_{\widetilde{d}_i}(x) \ne h_{\widetilde{d}_j}(x)$ for any $1 \le i \ne j \le t$.
\end{itemize}

\end{lemma}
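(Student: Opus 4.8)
The plan is to deduce Lemma \ref{3:lem2} directly from Lemma \ref{3:lem33} by matching each $d_j$ (resp. $\widetilde{d}_j$) to the normal form $d=s(r-1)+\triangle$ and checking the congruence conditions. The first step is to verify that the coprimality hypothesis of Lemma \ref{3:lem33} holds under assumptions (a)--(c) and (a')--(c'). For $\ca$, from (\ref{2:da1}) we read off $\triangle = 2f$ and $s=jh+f$ for each $j$, so the hypothesis ``$(\triangle,(r-1)/(q-1))=1$ or $(\triangle/2,(r-1)/(q-1))=1$ if $\triangle$ is even'' becomes exactly condition (b), $(f,(r-1)/(q-1))=1$, since $\triangle/2=f$. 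For $\cb$, from (\ref{2:db}) we have $\triangle=f$ and $s=jh+\frac{f-h}{2}$; here the case split (c1')/(c2') of the assumptions matches the two alternatives in the hypothesis of Lemma \ref{3:lem33} (the odd case gives $(f,(r-1)/(q-1))=1$, the even case gives $(f/2,(r-1)/(q-1))=1$), and for $p=2$ condition (b') plus the interpretation of $\tfrac12$ does the same job.

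The second step handles part (i) of each claim, the degrees. I would apply Lemma \ref{3:lem33}(i), which asks whether $\triangle\equiv 2s\pmod{r+1}$. For $\ca$ with $j=0$ we have $2s=2f=\triangle$, so trivially $\triangle\equiv 2s$ and $\deg h_{d_0}=m$. For $1\le j\le t$ we have $2s-\triangle = 2(jh+f)-2f = 2jh$, so the question is whether $2jh\equiv 0\pmod{r+1}$, i.e. whether $(r+1)\mid 2jh$. Writing $h=eh'$ with $e=(h,r+1)$ and $h'$ coprime to $(r+1)/e$, this reduces to $\frac{r+1}{e}\mid 2je'$ for the appropriate cofactor; using $1\le j<\frac{r+1}{2e}$ from (a) one shows $2j<(r+1)/e$, which forces the divisibility to fail, hence $\triangle\not\equiv 2s$ and $\deg h_{d_j}=2m$. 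For $\cb$ the computation is analogous: with $s=jh+\frac{f-h}{2}$ and $\triangle=f$ we get $2s-\triangle = 2jh-h = (2j-1)h$, and using (a') one checks $(r+1)\nmid (2j-1)h$ for $1\le j\le t$, giving $\deg h_{\widetilde{d}_j}=2m$ throughout.

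The third step is part (ii), distinctness of the minimal polynomials, via Lemma \ref{3:lem33}(ii). Two indices $i,j$ give the same polynomial iff $s_i\equiv s_j\pmod{r+1}$ or $s_i+s_j\equiv\triangle\pmod{r+1}$. For $\ca$, $s_i-s_j=(i-j)h$ and the first condition would need $(r+1)\mid(i-j)h$; the range of indices together with (a) rules this out exactly as in step two. For the second condition, $s_i+s_j-\triangle = (i+j)h + 2f - 2f = (i+j)h$ when $i,j\ge1$ (and similarly one treats the case involving $s_0=f$), so one needs $(r+1)\nmid (i+j)h$, and since $0\le i+j\le 2t$ with $2t<(r+1)/e$ from (a), this fails whenever $i+j>0$. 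The same style of bookkeeping settles $\cb$. Throughout, the key arithmetic fact I would isolate as a sub-step is: under (a), any integer $c$ with $0<|c|\le 2t$ (or the relevant shifted range) satisfies $(r+1)\nmid ch$, because $(r+1)\mid ch \iff \frac{r+1}{e}\mid c\cdot\frac{h}{e}$ and $\frac{h}{e}$ is coprime to $\frac{r+1}{e}$ while $|c|<\frac{r+1}{e}$.

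The main obstacle is not conceptual but lies in the careful case analysis for $\cb$: because $s=jh+\frac{f-h}{2}$ involves a half-integer unless $h\equiv f\pmod 2$, the conditions (c1')/(c2') and the $p=2$ interpretation of $\tfrac12$ must be tracked so that $s$ is genuinely an integer and the reductions modulo $r+1$ and modulo $r-1$ remain valid simultaneously. Verifying that the parity hypotheses make $\frac{f-h}{2}$ a legitimate integer, and that the coprimality then transfers correctly to the hypothesis of Lemma \ref{3:lem33}, is the delicate point; once that is in place, both parts (i) and (ii) follow by the elementary divisibility bound above.
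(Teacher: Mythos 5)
Your proposal is correct and takes exactly the paper's route: the paper deduces Lemma \ref{3:lem2} from Lemma \ref{3:lem33} with no written detail (``From Lemma \ref{3:lem33} we immediately obtain the following''), and your normal-form bookkeeping ($\triangle=2f$, $s_j=jh+f$ for $\ca$; $\triangle=f$, $s_j=jh+\tfrac{f-h}{2}$ for $\cb$, plus the divisibility bound $(r+1)\nmid ch$ for $0<|c|<\tfrac{r+1}{e}$) supplies precisely the omitted verification. On the one point you flagged as delicate: for $p=2$ the identity $2s_j-\triangle\equiv(2j-1)h\pmod{r+1}$ holds only if $\tfrac12$ is read as the inverse of $2$ modulo $r+1$ (equivalently modulo $r^2-1$), which is the interpretation that makes $\widetilde{d}_j$ well defined modulo $r^2-1$; taken literally modulo $r-1$ as the paper writes, the residue of $s_j$ modulo $r+1$ is ambiguous, so your reading is the right one.
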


\section{$p=2$: Proofs of (i) of Theorems \ref{1:thm1} and \ref{1:thm2}}\label{sec-3}

We first prove Statement (i) of Theorem \ref{1:thm1}. By Delsarte's Theorem \cite{dels} and Lemma \ref{3:lem2}, $\ca$ is a cyclic code of length $n$ with parity-check polynomial given by $\prod_{i=0}^t h_{d_i}(x)$, which is of degree $(2t+1)m$, hence $\ca$ has dimension $(2t+1)m$ over $\gf(q)$. Since
\[\delta=\left(d_0,\ldots,d_t,r^2-1\right)=\left((r+1)f,(r-1)e\right),\]
we see that the Hamming weight of a codeword $c(\vec{a})$ can be expressed as
\begin{eqnarray*} \delta \omega_H(c(\vec{a}))&=&r^2-\#\left\{x \in \gf(r^2): \tr_{r/q}\left(a_0x^{d_0}\right)+\tr_{r^2/q}\left(\sum_{j=1}^ta_jx^{d_j}\right)=0 \right\}\\
&=&r^2-\frac{1}{q}\sum_{x \in \gf(r^2)} \sum_{\lambda \in \gf(q)} \psi_q \left\{\lambda \tr_{r/q} \left( a_0x^{d_0}\right)+\lambda \tr_{r^2/q}\left( \sum_{j=1}^ta_jx^{d_j}\right)\right\} \\
&=&r^2\left(1-\frac{1}{q}\right)-\frac{S(\vec{a})}{q},
\end{eqnarray*}
where $\psi_q:\gf(q) \to \mathbb{C}^*$ is the standard additive character given by $\psi_q(x)=\zeta_p^{\tr_{q/p} (x)}$ for any $x \in \gf(q)$, $\zeta_p=\exp\left(2 \pi \sqrt{-1}/p\right)$, and
\begin{eqnarray} \label{3:sa} S(\vec{a}):=(q-1)+\sum_{\lambda \in \gf(q)^*}\sum_{x \in \gf(r^2)^*} \psi_q \left\{\lambda\tr_{r/q}\left(a_0x^{d_0}\right)+\lambda \tr_{r^2/q}\left(\sum_{j=1}^ta_jx^{d_j}\right)\right\}.\end{eqnarray}

Since $(r-1,r+1)=1$, we can write each $x \in \gf(r^2)^*$ uniquely as $x=yz$ for $y \in \gf(r)^*$ and $z \in U:=\{\omega \in \gf(r^2): \bar{\omega} \omega=\omega^{r+1}=1\}$. Here we denote $\bar{x}:=x^r$. Note that $U$ is a cyclic subgroup of $\gf(r^2)^*$ generated by $\gamma^{r-1}$. Since $y^r=y$ for any $y \in \gf(r)$, from (\ref{2:da1}) we have
\[x^{d_j}=y^{d_j}z^{d_j}=y^{2f} z^{-2jh}, \forall j, \]
and
\[x^{rd_j}=y^{2rf}z^{-2rjh}=y^{2f} z^{2jh}, \forall j. \]
Hence
\[S(\vec{a})=(q-1)+\sum_{z \in U}\sum_{y \in \gf(r)^*} \sum_{\lambda \in \gf(q)^*} \psi_r\left\{\lambda y^{2f}\left(a_0 +\sum_{j=1}^t a_j z^{-2jh}+\bar{a}_jz^{2jh}\right)\right\}.\]
Here $\psi_r: \gf(r) \to \{\pm 1\}$ is the standard additive character. Since $\left(2f,\frac{r-1}{q-1}\right)=\left(f,\frac{r-1}{q-1}\right)=1$, we observe that as $\lambda$ runs over $\gf(q)^*$ and $y$ runs over $\gf(r)^*$ respectively, the value $\lambda y^{2f}$ will run over each element of $\gf(r)^*$ exactly $(q-1)$ times. Hence we obtain
\begin{eqnarray*} \label{3:saq} S(\vec{a})=(q-1)+(q-1) \sum_{z \in U} \sum_{y \in \gf(r)^*} \psi_r \left\{ y \left( a_0 +\sum_{j=1}^t a_j z^{-2jh}+ \bar{a}_j
z^{2jh} \right) \right\}. \end{eqnarray*}
Clearly $S(\vec{a})=(q-1)r(N-1)$, where $N$ is the number of $z \in U$ such that
\[a_0+\sum_{j=1}^t a_j z^{-2jh}+\bar{a}_j z^{2jh}=0. \]
Letting $u=z^{-2h}$ and multiplying $u^{t}$ on both sides, we find
\[a_0u^t+\sum_{j=1}^ta_ju^{t+j}+\bar{a}_j u^{t-j}=0. \]
This is a polynomial of degree at most $2t$, it may have $0,1,\ldots$, or $2t$ solutions for $u$, and for each such $u \in U$, the number of $z \in U$ such that $z^{-2h}=u$ is always $e=(2h,r+1)=(h,r+1)$. Hence the possible values of $N$ are $je, \forall 0 \le j \le 2t$. This indicates that $S(\vec{a})$ and $\omega_H(c(\vec{a}))$ take at most $(2t+1)$ distinct values. This proves (i) of Theorem \ref{1:thm1}.

Statement (i) of Theorem \ref{1:thm2} can be proved similarly by using the above idea and by modifying the proof of \cite[Theorem 1]{XLZD} for $\cb$ accordingly. We omit the details. \qquad $\square$

\section{$p=2$: Proofs of (ii) of Theorems \ref{1:thm1} and \ref{1:thm2}}  \label{sec-4}

Since it is proved that there are only a few non-zero weights in $\ca$ and $\cb$, a standard procedure to determine the weight distribution is to compute power moment identities.

We now prove Statement (ii) of Theorem \ref{1:thm1}. Let $\mu_j$ be the frequency of weight $w_j$ for each $j$. Obviously $S(\vec{a})=(q-1)r^2$ if and only if $\vec{a}=\vec{0}$. We have
\begin{eqnarray} \label{4:id0} r^{1+2t}=1+\sum_{j=0}^{2t} \mu_j, \end{eqnarray}
and for any positive integer $k$,
\begin{eqnarray} \label{4:idr} \sum_{\substack{a_0 \in \gf(r)\\
a_j \in \gf(r^2), 1 \le j \le t}} \left(S(\vec{a})-(q-1)\right)^k=(q-1)^k\left(r^{2}-1\right)^k+\sum_{j=0}^{2t}  (q-1)^k\left(jer-r-1\right)^k \mu_j. \end{eqnarray}
On the other hand, by the orthogonal relation
\[\frac{1}{r^2} \sum_{x \in \gf(r^2)} \psi_q\left\{\tr_{r^2/q}(xa)\right\}=\left\{\begin{array}{ccl}
0&:& \mbox{ if } a \in \gf(r^2)^*\\
1&:& \mbox{ if } a =0, \end{array}\right.\]
we find easily that
\begin{eqnarray} \label{4:idr2} \sum_{\substack{a_0 \in \gf(r)\\
a_j \in \gf(r^2), 1 \le j \le t}} \left(S(\vec{a})-1\right)^k=r^{1+2t}M_k, \end{eqnarray}
where $M_k$ denotes the number of solutions $(\lambda_1,\ldots,\lambda_k) \in \left(\gf(q)^*\right)^k$ and $(x_1,\ldots,x_k) \in \left(\gf(r^2)^*\right)^k$ that satisfy the equations
\begin{eqnarray} \label{4:nra}
\left\{\begin{array}{ccc}
\lambda_1 x_1^{d_0}+\lambda_2x_2^{d_0}+\cdots+\lambda_kx_k^{d_0} &=&0, \\
\lambda_1x_1^{d_1}+\lambda_2x_2^{d_1}+\cdots+\lambda_kx_k^{d_1} &=&0, \\
\cdots \cdots & & \\
\lambda_1x_1^{d_t}+\lambda_2x_2^{d_t}+\cdots+\lambda_kx_k^{d_t} &=&0.
\end{array}\right.
\end{eqnarray}
Lemma \ref{4:lem1} which we will prove below states that $M_k=(q-1)^kN_k$ for any $1 \le k \le 2t$, where $N_k$ is given by the formula (\ref{2:nr}). Combining this with identities (\ref{4:id0}), (\ref{4:idr}) and (\ref{4:idr2}) for $1 \le k \le 2t$, we obtain the matrix equation
\[M_t^{(1)} \cdot \vec{\mu}=\vec{b}, \]
where $M_t^{(1)}, \vec{\mu}$ and $\vec{b}$ are explicitly defined in Theorem \ref{1:thm1}. Since $M_t^{(1)}$ is invertible, we obtain $\vec{\mu}=\left(M_t^{(1)}\right)^{-1} \cdot \vec{b}$, as claimed by (ii) of Theorem \ref{1:thm1}. Now we prove the technical lemma.

\begin{lemma} \label{4:lem1} $M_k=(q-1)^kN_k$ for any $1 \le k \le 2t$, where $N_k$ is given by the formula (\ref{2:nr}).
\end{lemma}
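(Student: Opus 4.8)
The plan is to reduce $M_k$ to a clean combinatorial count on the ``unit circle'' and then to collapse that count by exploiting the Frobenius (conjugation) symmetry. First I would reuse the substitution from the proof of Theorem \ref{1:thm1}(i): writing each $x_i=y_iz_i$ with $y_i\in\gf(r)^*$ and $z_i\in U$, we have $x_i^{d_j}=y_i^{2f}z_i^{-2jh}$, so that setting $\mu_i=\lambda_iy_i^{2f}$ and $w_i=z_i^{-2h}$ turns the system (\ref{4:nra}) into $\sum_{i=1}^k\mu_iw_i^{\,j}=0$ for $0\le j\le t$. As established there, the map $(\lambda_i,y_i)\mapsto\mu_i$ hits each element of $\gf(r)^*$ exactly $(q-1)$ times (using $\left(f,\frac{r-1}{q-1}\right)=1$ and $p=2$), while $z_i\mapsto w_i$ is $e$-to-one onto the subgroup $V\le U$ of order $(r+1)/e$. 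Hence $M_k=(q-1)^ke^k\,T_k$, where $T_k$ counts the pairs $\big((\mu_i),(w_i)\big)\in(\gf(r)^*)^k\times V^k$ solving the transformed system, and it suffices to prove $T_k=N_k/e^k$.

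Next I would organize the count of $T_k$ by the ``collision pattern'' of $(w_1,\dots,w_k)$: group the indices into blocks $S_1,\dots,S_s$ according to the common value $v_l\in V$ that each $w_i$ takes, the $v_l$ being distinct. Writing $c_l=\sum_{i\in S_l}\mu_i$, the system becomes the Vandermonde-type system $\sum_{l=1}^s c_lv_l^{\,j}=0$ for $0\le j\le t$.

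The crucial step, and the one I expect to be the main obstacle, is to show that every $c_l$ must vanish, even though the system has only $t+1$ equations while $s$ may be as large as $k$ (and so may exceed $t+1$). Here I would use that $\mu_i\in\gf(r)$ is fixed by the Frobenius $x\mapsto\overline{x}=x^r$, whereas $w_i\in U$ satisfies $\overline{w_i}=w_i^{-1}$. Applying Frobenius to each equation produces the reversed equations $\sum_l c_lv_l^{-j}=0$ for $0\le j\le t$, so in fact $\sum_l c_lv_l^{\,j}=0$ holds for \emph{all} $j\in\{-t,\dots,t\}$. Multiplying the $l$-th term by $v_l^{t}$ (legitimate since $v_l\ne 0$) gives $\sum_l (c_lv_l^{-t})\,v_l^{\,m}=0$ for $m=0,1,\dots,2t$, a genuine $(2t+1)\times s$ Vandermonde system in the distinct nodes $v_l$. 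Since $s\le k\le 2t<2t+1$, its columns are linearly independent, forcing $c_l=0$ for every $l$. Consequently no block can be a singleton (a single nonzero $\mu_i$ cannot sum to $0$), so all blocks have size $\ge 2$ and $s\le k/2$. This Frobenius-doubling of the available Vandermonde rank is exactly the ``subtle manipulation'' that makes the otherwise problematic configurations with many distinct $w_i$ disappear.

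Finally I would assemble the count. For a block of size $j$ the number of admissible $(\mu_i)_{i\in S_l}\in(\gf(r)^*)^{j}$ with $\sum_{i\in S_l}\mu_i=0$ is exactly $B_j=r^{-1}(r-1)^j+(-1)^j(1-r^{-1})$. Summing over set partitions of $\{1,\dots,k\}$ with all blocks of size $\ge 2$, the number of partitions of type $(\lambda_2,\lambda_3,\dots)$ is $k!/\prod_j\big((j!)^{\lambda_j}\lambda_j!\big)$, the number of injective assignments of the $s=\sum_j\lambda_j$ blocks to distinct values of $V$ is $\binom{(r+1)/e}{s}\,s!$, and the admissible $\mu$-choices contribute $\prod_j B_j^{\lambda_j}$. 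Collecting these factors gives $T_k=k!\sum_\lambda\binom{(r+1)/e}{s}s!\prod_j\frac{(B_j/j!)^{\lambda_j}}{\lambda_j!}=N_k/e^k$ by the definition (\ref{2:nr}), whence $M_k=(q-1)^ke^kT_k=(q-1)^kN_k$. The combinatorial bookkeeping here is routine; the only genuinely delicate point is the Frobenius argument bounding the rank in the previous paragraph.
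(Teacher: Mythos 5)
Your proposal is correct and follows essentially the same route as the paper: the same substitution $\mu_i=\lambda_i y_i^{2f}$, $w_i=z_i^{-2h}$ reducing $M_k$ to a count over $\gf(r)^*\times V$, the same Frobenius ($r$-th power) trick to double the Vandermonde system from $t+1$ to $2t+1$ equations and force all block sums $c_l$ to vanish, and the same partition-based count yielding $N_k$. The only difference is one of presentation: the paper sketches this argument and defers the details to \cite{XLZD}, whereas you have written them out in full (correctly, including the $B_j$ count and the bookkeeping showing $T_k=N_k/e^k$).
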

\begin{proof}

Using the same notation as before, we may write each $x_i \in \gf(r^2)^*$ as
\begin{eqnarray} \label{6:xi} x_i=y_i z_i, \quad y_i \in \gf(r)^*, z_i \in U. \end{eqnarray}
Since
\[x_i^{d_j}=y_i^{2f} z_i^{-2jh}, \, \forall i,j, \]
The equations (\ref{4:nra}) can be written as
\begin{eqnarray} \label{4:nraa} \sum_{i=1}^k \lambda_i \cdot y_i^{2f} z_i^{-2jh}=0, \forall 0 \le j \le t. \end{eqnarray}

Since $\left(2f,\frac{r-1}{q-1}\right)=1$, $\lambda_i \cdot y_i^{2f}$ takes each value of $\gf(r)^*$ exactly $(q-1)$ times as $\lambda_i$ and $y_i$ run over the sets $\gf(q)^*$ and $\gf(r)^*$ respectively. So $M_{k}=(q-1)^kM_{k,1}$ where $M_{k,1}$ counts the number of $y_i \in \gf(r)^*, z_i \in U \, \forall i$ such that
\begin{eqnarray} \label{4:nrb} \sum_{i=1}^k y_i z_i^{-2jh}=0, \forall 0 \le j \le t. \end{eqnarray}
In \cite{XLZD} we have used a combinatorial method to obtain the number of solutions to equations (\ref{4:nrb}). Roughly speaking, let $u_i=z_i^{-2h} \in U^e$ where $e=(2h,r+1)=(h,r+1)$. Using $y_i,u_i$'s, we can write (\ref{4:nrb}) as a matrix equation ${\bf A} \cdot \vec{y}=\vec{0}$ where
\begin{eqnarray*}
{\bf A}=\left[\begin{array}{rrrr}
1 & 1 & \cdots &1\\
u_1 & u_2 & \cdots &u_k \\
u_1^2 & u_2^2 & \cdots &u_k^2 \\
\vdots & \vdots & \ddots &\vdots \\
u_1^{t} &u_2^{t} & \cdots &u_k^{t}
\end{array} \right], \quad \vec{y}=
\left[\begin{array}{c}
y_1\\
y_2\\
\vdots\\
y_k\end{array}\right]. \end{eqnarray*}
We observe that ${\bf A}$ is a Vandermonde matrix. We may take the $r$-th power of each equation to obtain additional $(t-1)$ equations. It turns out that for any $1 \le k \le 2t$ there are only ``trivial'' solutions which can be counted exactly by using combinatorial argument. We conclude that $M_{k,1}=N_k$, which is given by the formula (\ref{2:nr}). Interested readers may review \cite{XLZD} for details. Therefore we obtain $M_k=(p-1)^kN_k$ as desired.

Statement (ii) of Theorem \ref{1:thm2} can be proved similarly, by using the above idea and by modifying the proof of \cite[Theorem 2]{XLZD} for $\cb$ accordingly. We omit the details. \end{proof}

\section{$p \ge 3$: Proofs of (i) of Theorems \ref{1:thm1} and \ref{1:thm2}}\label{sec-32}

We first prove Statement (i) of Theorem \ref{1:thm1}. Similar to the case that $p=2$, $\ca$ is a cyclic code of length $n$ with parity-check polynomial given by $\prod_{i=0}^t h_{d_i}(x)$, which is of degree $(2t+1)m$, hence $\ca$ has dimension $(2t+1)m$ over $\gf(q)$, and the Hamming weight of a codeword $c(\vec{a})$ can be expressed as
\begin{eqnarray*} \delta \omega_H(c(\vec{a}))&=&r^2\left(1-\frac{1}{q}\right)-\frac{S(\vec{a})}{q},
\end{eqnarray*}
where
\begin{eqnarray} \label{32:sa} S(\vec{a}):=(q-1)+\sum_{\lambda \in \gf(q)^*}\sum_{x \in \gf(r^2)^*} \psi_q \left\{\lambda\tr_{r/q}\left(a_0x^{d_0}\right)+\lambda \tr_{r^2/q}\left(\sum_{j=1}^ta_jx^{d_j}\right)\right\}.\end{eqnarray}

\subsection{Case 1: $m$ is odd.} We write each $x \in \gf(r^2)^*$ uniquely as $x=yw$ for $y \in \gf(r)^*$ and $w \in \Omega=\{ 1,\gamma,\gamma^2,\ldots,\gamma^r\}$ (see also \cite[Lemma 2]{gegeng2}). We may observe that $U:=\{\alpha^{r-1}: \alpha \in \Omega \}$ is a cyclic subgroup of $\gf(r^2)^*$ generated by $\gamma^{r-1}$. Since $y^r=y$ for any $y \in \gf(r)$, from (\ref{2:da1}) we have
\[x^{d_j}=y^{d_j}\omega^{d_j}=y^{2f} \omega^{d_j}, \forall j, \]
and
\[x^{rd_j}=y^{2fr}\omega^{rd_j}=y^{2f} \overline{\omega}^{d_j}, \forall j. \]
Here we denote $\bar{x}:=x^r$. Hence
\[S(\vec{a})=(q-1)+\sum_{\omega \in \Omega}\sum_{y \in \gf(r)^*} \sum_{\lambda \in \gf(q)^*} \psi_q \left\{\tr_{r/q}\left(\lambda y^{2f}\left\{a_0\omega^{d_0} +\sum_{j=1}^ta_j\omega^{d_j}+\bar{a}_j\overline{\omega}^{d_j}\right\}\right)\right\}.\]
Since $\frac{r-1}{q-1} \equiv m \pmod{2}$ is odd, $\left(2f,\frac{r-1}{q-1}\right)=\left(f,\frac{r-1}{q-1}\right)=1$. We observe that as $\lambda$ runs over $\gf(q)^*$ and $y$ runs over $\gf(r)^*$ respectively, the value $\lambda y^{2f}$ will run over each element of $\gf(r)^*$ exactly $(q-1)$ times. Hence we obtain
\begin{eqnarray*} \label{32:saq} S(\vec{a})=(q-1)+(q-1)\sum_{\omega \in \Omega}\sum_{y \in \gf(r)^*} \psi_q \left\{\tr_{r/q}\left(y\left\{a_0\omega^{d_0} +\sum_{j=1}^ta_j\omega^{d_j}+\bar{a}_j
\overline{\omega}^{d_j}\right\}\right)\right\}. \end{eqnarray*}
Clearly $S(\vec{a})=(q-1)r(N-1)$, where $N$ is the number of $\omega \in \Omega$ such that
\[a_0\omega^{d_0} +\sum_{j=1}^ta_j\omega^{d_j}+\bar{a}_j\omega^{rd_j}=0. \]
Dividing $\omega^{d_0}$ on both sides and writing $z=\omega^{r-1} \in U$, the equation becomes
\[a_0+\sum_{j=1}^ta_j z^{jh}+\bar{a}_jz^{-jh}=0. \]
Letting $u=z^{h}$ and multiplying $u^{t}$ on both sides, we find
\[a_0u^t+\sum_{j=1}^ta_ju^{t+j}+\bar{a}_j u^{t-j}=0. \]
This is a polynomial of degree at most $2t$, so possibly it may have $0,1,\ldots$, or $2t$ solutions for $u$, and for each such $u \in U$, the number of $z \in U$ such that $z^{h}=u$ is always $e=(h,r+1)$. Hence the possible values of $N$ are $je, \forall 0 \le j \le 2t$. This indicates that $S(\vec{a})$ and $\omega_H(c(\vec{a}))$ take at most $(2t+1)$ distinct values. This proves (i) of Theorem \ref{1:thm1} when $m$ is odd.

\subsection{Case 2: $m$ and $h$ are both even.} For this case we use a different strategy. Since
\[\gf(r)^* \bigcap U=\{-1,1\}, \]
we may write each $x \in \gf(r^2)^*$ as
\begin{eqnarray} \label{32:xsplit} x=y z \epsilon, \quad x \in \gf(r)^*,\, z \in U,\, \epsilon \in \{\xi,1\}, \end{eqnarray}
where $\xi \in \gf(r^2)^*$ is a fixed non-square, $U$ is the cyclic subgroup of $\gf(r^2)^*$ generated by $\gamma^{r-1}$. We may choose $\xi=\gamma^{(r+1)/2}$ as $r =q^m \equiv 1 \pmod{4}$. It is clear that as $y,z,\epsilon$ run over the sets $\gf(r)^*, U$ and $\{\xi,1\}$ respectively, the value $x$ will run over each element of $\gf(r^2)^*$ exactly twice. Also observing
\[x^{d_j}=(yz\epsilon)^{d_j}= \left(y^2\epsilon^{r+1}\right)^f z^{-2jh}\epsilon^{(r-1)jh}, \forall j,\]
\[x^{rd_j}=(yz\epsilon)^{rd_j}= \left(y^2\epsilon^{r+1}\right)^f z^{2jh}\epsilon^{-(r-1)jh}, \forall j,\]
and
\[\xi^{(r-1)h}=\left(\gamma^{(r^2-1)}\right)^{h/2}=1, \]
we can rewrite (\ref{32:sa}) as
\[S(\vec{a})=(q-1)+\frac{1}{2}\sum_{\lambda \in \gf(q)^*}\sum_{\substack{y \in \gf(r)^*\\
z \in U\\
\epsilon \in \{1,\xi\}}} \psi_q \left\{\tr_{r/q}\left(\lambda\left(y^2\epsilon^{r+1}\right)^f \left\{ a_0+
\sum_{j=1}^ta_jz^{-2jh}+\bar{a}_j z^{2jh}\right\} \right)\right\}.\]
Next we observe that $\xi^{r+1}=\left(\gamma^{r+1}\right)^{(r+1)/2}$ is a non-square in $\gf(r)^*$, so as $y$ runs over $\gf(r)^*$ and $\epsilon$ runs over $\{\xi,1\}$ respectively, the value $y^2 \epsilon^{r+1}$ will run over each element of $\gf(r)^*$ exactly twice. Therefore we obtain
\[S(\vec{a})=(q-1)+\sum_{z \in U} \sum_{\lambda \in \gf(q)^*} \sum_{y \in \gf(r)^*} \psi_q \left\{\tr_{r/q}\left(\lambda y^f \left\{ a_0+
\sum_{j=1}^ta_jz^{-2jh}+\bar{a}_j z^{2jh}\right\} \right)\right\}.\]
Since $\left(f,\frac{r-1}{q-1}\right)=1$, $\lambda y^f$ will take each value of $\gf(r)^*$ exactly $(q-1)$ times as $y,\lambda$ runs over $\gf(r)^*$ and $\gf(q)^*$ respectively. So
\begin{eqnarray*} S(\vec{a})&=&(q-1)+ (q-1) \sum_{z \in U} \sum_{y \in \gf(r)^*} \psi_q \left\{\tr_{r/q}\left(y \left\{ a_0+
\sum_{j=1}^ta_jz^{-2jh}+\bar{a}_j z^{2jh}\right\} \right)\right\}\\
&=&(q-1)r(N-1),\end{eqnarray*}
where $N$ is the number of $z \in U$ such that
\[a_0+\sum_{j=1}^ta_j z^{-2jh}+\bar{a}_jz^{2jh}=0. \]
Letting $u=z^{-2h}$ and multiplying $u^{t}$ on both sides, we find
\[a_0u^t+\sum_{j=1}^ta_ju^{t+j}+\bar{a}_j u^{t-j}=0. \]
This yields at most $2t$ solutions for $u$, and for each such $u$, the number of $z \in U$ such that $z^{-2h}=u$ is exactly $(-2h,r+1)=(h,r+1)=e$ because $r \equiv 1 \pmod{4}$ and $2|h$. Hence $N \in \{je: 0 \le j \le 2t\}$. This indicates again that $S(\vec{a})$ and $\omega_H(c(\vec{a}))$ take at most $(2t+1)$ distinct values. This concludes the case for both $m$ and $h$ being even. Now (i) of Theorem \ref{1:thm1} is proved.

Statement (i) of Theorem \ref{1:thm2} can be proved similarly by using the above idea and by modifying the proof of \cite[Theorem 1]{XLZD} for $\cb$ accordingly. We omit the details. \qquad $\square$

\section{$p \ge 3$: Proofs of (ii) of Theorems \ref{1:thm1} and \ref{1:thm2}}  \label{sec-42}

We now prove Statement (ii) of Theorem \ref{1:thm1}. Let $\mu_j$ be the frequency of weight $w_j$ for each $j$. Similar to the case $p=2$, we have
\begin{eqnarray} \label{42:id0} r^{1+2t}=1+\sum_{j=0}^{2t} \mu_j, \end{eqnarray}
and for any positive integer $k$,
\begin{eqnarray} \label{42:idr} \sum_{\substack{a_0 \in \gf(r)\\
a_j \in \gf(r^2), 1 \le j \le t}} \left(S(\vec{a})-(q-1)\right)^k=(q-1)^k\left(r^{2}-1\right)^k+\sum_{j=0}^{2t}  (q-1)^k \left(jer-r-1\right)^k \mu_j, \end{eqnarray}
and
\begin{eqnarray} \label{42:idr2} \sum_{\substack{a_0 \in \gf(r)\\
a_j \in \gf(r^2), 1 \le j \le t}} \left(S(\vec{a})-1\right)^k=q^{1+2t}M_k, \end{eqnarray}
where $M_k$ denotes the number of solutions $(\lambda_1,\ldots,\lambda_k) \in \left(\gf(q)^*\right)^k$ and $(x_1,\ldots,x_k) \in \left(\gf(r^2)^*\right)^k$ that satisfy the equations
\begin{eqnarray} \label{42:nra}
\left\{\begin{array}{ccc}
\lambda_1 x_1^{d_0}+\lambda_2x_2^{d_0}+\cdots+\lambda_kx_k^{d_0} &=&0, \\
\lambda_1x_1^{d_1}+\lambda_2x_2^{d_1}+\cdots+\lambda_kx_k^{d_1} &=&0, \\
\cdots \cdots & & \\
\lambda_1x_1^{d_t}+\lambda_2x_2^{d_t}+\cdots+\lambda_kx_k^{d_t} &=&0.
\end{array}\right.
\end{eqnarray}
Lemma \ref{43:lem1} which we will prove below states that that $M_k=(q-1)^kN_k$ for any $1 \le k \le 2t$, where $N_k$ is given by the formula (\ref{2:nr}). Combining this with identities (\ref{42:id0}), (\ref{42:idr}) and (\ref{42:idr2}) for $1 \le k \le 2t$, we obtain the matrix equation
\[M_t^{(1)} \cdot \vec{\mu}=\vec{b}, \]
where $M_t^{(1)}, \vec{\mu}$ and $\vec{b}$ are explicitly defined in Theorem \ref{1:thm1}. Since $M_t^{(1)}$ is invertible, we obtain $\vec{\mu}=\left(M_t^{(1)}\right)^{-1} \cdot \vec{b}$, as claimed by (ii) of Theorem \ref{1:thm1}. Now we prove the technical lemma.

\begin{lemma} \label{43:lem1} $M_k=(q-1)^kN_k$ for any $1 \le k \le 2t$, where $N_k$ is given by the formula (\ref{2:nr}).
\end{lemma}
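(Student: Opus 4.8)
The plan is to mirror the proof of Lemma \ref{4:lem1} for $p=2$: I would reduce $M_k$ to the count of the ``trivial'' solutions of a homogeneous Vandermonde-type system over $\gf(r)^* \times U$, which is precisely the combinatorial quantity $N_k$ in (\ref{2:nr}) evaluated in \cite{XLZD}. The only genuinely new ingredient is that, for $p \ge 3$, the factorization of $\gf(r^2)^*$ is parity-dependent, so I would split into the two cases used in Section \ref{sec-32} and show that each reduces to the \emph{same} homogeneous system. In both cases the strategy is identical: substitute the appropriate decomposition of each $x_i \in \gf(r^2)^*$ into (\ref{42:nra}), use a coprimality condition to extract a clean factor $(q-1)^k$, and then invoke the combinatorial evaluation of \cite{XLZD}.

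In Case 1 ($m$ odd) I would write $x_i = y_i w_i$ with $y_i \in \gf(r)^*$ and $w_i \in \Omega$, which is a bijection since $\Omega$ is a transversal of $\gf(r)^*$ in $\gf(r^2)^*$ and $w \mapsto w^{r-1}$ maps $\Omega$ bijectively onto $U$. Setting $z_i = w_i^{r-1} \in U$ one has $x_i^{d_j} = y_i^{2f} w_i^{d_j}$, and collecting the $z_i$-free part into a single multiplier $C_i := \lambda_i y_i^{2f} w_i^{2f} z_i^{f} \in \gf(r)^*$, the $(t+1)$ equations become $\sum_i C_i z_i^{jh} = 0$ for $0 \le j \le t$. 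Applying the $r$-th power Frobenius (under which $C_i^r = C_i$ and $z_i^r = z_i^{-1}$) yields the conjugate relations for $-t \le j \le -1$, producing a system of $2t+1$ equations in the nodes $u_i = z_i^{h} \in U^e$. Because $(2f,\frac{r-1}{q-1}) = (f,\frac{r-1}{q-1}) = 1$ when $m$ is odd, the value $\lambda_i y_i^{2f}$ runs over each element of $\gf(r)^*$ exactly $(q-1)$ times as $(\lambda_i,y_i)$ ranges over $\gf(q)^* \times \gf(r)^*$ for each fixed $w_i$; hence so does $C_i$, and this factors out $(q-1)^k$, leaving $M_k = (q-1)^k M_{k,1}$ with $M_{k,1}$ counting $C_i \in \gf(r)^*, z_i \in U$ solving the reduced system.

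In Case 2 ($m$ and $h$ both even) I would instead use the two-to-one decomposition (\ref{32:xsplit}), $x_i = y_i z_i \epsilon_i$ with $\epsilon_i \in \{1,\xi\}$ and $\xi = \gamma^{(r+1)/2}$. Since $2 \mid h$ gives $\xi^{(r-1)h} = 1$, the $\epsilon_i$-dependence in the exponent collapses and $x_i^{d_j} = (y_i^2 \epsilon_i^{r+1})^f z_i^{-2jh}$, which has exactly the same shape as in the $p=2$ reduction. The double cover introduces an overcount of $2^k$, but this cancels precisely against the fact that $y_i^2 \epsilon_i^{r+1}$ sweeps $\gf(r)^*$ twice (as $\xi^{r+1}$ is a non-square), and $(f,\frac{r-1}{q-1}) = 1$ again extracts the factor $(q-1)^k$. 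Thus both cases are reduced to the common homogeneous system $\sum_i C_i u_i^{j} = 0$, $-t \le j \le t$, with $C_i \in \gf(r)^*$ and $u_i \in U^e$; the evaluation that for $1 \le k \le 2t$ (using $t < \frac{r+1}{2e}$ and the Vandermonde structure) its solutions are exactly the ``trivial'' ones counted by $N_k$ in (\ref{2:nr}) is identical to \cite{XLZD} and may be invoked verbatim, giving $M_{k,1} = N_k$ and hence $M_k = (q-1)^k N_k$.

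The main obstacle is the bookkeeping of multiplicities in Case 2: one must verify carefully that the $2^k$ factor from the double cover cancels exactly against the two-to-one nature of $(y_i,\epsilon_i) \mapsto y_i^2 \epsilon_i^{r+1}$, and that the collapse $\xi^{(r-1)h} = 1$ genuinely removes all $\epsilon_i$-dependence, so that the resulting system coincides with the one obtained in Case 1 and in the $p=2$ treatment. A secondary point requiring care is confirming in Case 1 that the absorbed multiplier $w_i^{2f} z_i^{f}$ is $r$-invariant (so $C_i \in \gf(r)^*$) and that $z_i \mapsto z_i^{h}$ is the expected $e$-to-one map onto $U^e$, ensuring the factors $e^k$ and $\binom{(r+1)/e}{\cdot}$ in $N_k$ are reproduced. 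Once the reduction to the common system is established in both parities, no further new combinatorics is needed.
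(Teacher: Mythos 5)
Your proposal is correct, and it reaches the paper's conclusion by a route that coincides with the paper's in one case and genuinely differs in the other. Your Case 2 is exactly the paper's argument: the two-to-one decomposition $x_i=y_iz_i\epsilon_i$ with $\xi=\gamma^{(r+1)/2}$, the collapse $\xi^{(r-1)h}=1$, the cancellation of the $2^k$ overcount against the two-to-one map $(y_i,\epsilon_i)\mapsto y_i^2\epsilon_i^{r+1}$, and the extraction of $(q-1)^k$ from $\left(f,\frac{r-1}{q-1}\right)=1$. In Case 1 ($m$ odd), however, the paper does \emph{not} switch decompositions: it keeps the same two-to-one parametrization $x_i=y_iz_i\epsilon_i$, now with $\xi=\gamma^{(r-1)/(q-1)}$, chosen so that $\epsilon_i^{(r+1)f}\in\gf(q)^*$ can be absorbed into $\lambda_i$ (via $\lambda_i'=\lambda_i\epsilon_i^{(r+1)f}$) and so that $\xi^{r-1}$ is a non-square of $U$, making $z_i^{-2}\epsilon_i^{r-1}$ sweep $U$ exactly twice; the resulting $2^k$ then cancels against $M_k=2^{-k}M_{k,1}$. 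You instead use the bijective transversal decomposition $x_i=y_iw_i$, $w_i\in\Omega$ (the one the paper employs only in Section \ref{sec-32} for part (i)), absorbing the $r$-invariant constant $w_i^{2f}z_i^{f}=w_i^{f(r+1)}\in\gf(r)^*$ into the coefficient $C_i$; your checks that $C_i\in\gf(r)^*$ and that $(\lambda_i,y_i)\mapsto\lambda_iy_i^{2f}$ is $(q-1)$-to-one onto $\gf(r)^*$ (using $\left(2f,\frac{r-1}{q-1}\right)=1$, valid since $\frac{r-1}{q-1}$ is odd) are sound. Both routes land on the same homogeneous system $\sum_i C_iz_i^{jh}=0$, $0\le j\le t$, over $\gf(r)^*\times U$, and then cite the combinatorial evaluation of \cite{XLZD}, exactly as the paper does. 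Your Case 1 is arguably cleaner: being one-to-one, it avoids the $2^{\pm k}$ double-cover bookkeeping and needs no specially chosen non-square, whereas the paper's version buys a single uniform parametrization for both parity cases, with only the choice of $\xi$ changing between them. The one detail you should make explicit in Case 2 is that the node map there is $z\mapsto z^{-2h}$ rather than $z\mapsto z^{h}$, so you need $(2h,r+1)=(h,r+1)=e$ (true because $r\equiv 1\pmod 4$ and $2\mid h$) in order for both cases to produce the same node set $U^e$ with the same multiplicity $e$, and hence the same count $N_k$.
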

\begin{proof}

Using the same notation as before, we may write each $x_i \in \gf(r^2)^*$ as
\begin{eqnarray} \label{6:xi} x_i=y_i z_i \epsilon_i, \quad y_i \in \gf(r)^*, z_i \in U, \epsilon_i \in \{\xi,1\}, \end{eqnarray}
where $\xi \in \gf(r^2)^*$ is a fixed non-square. As $y_i,z_i,\epsilon_i$ run over the sets $\gf(r)^*,U$ and $\{\xi,1\}$ respectively, $x_i=y_iz_i\epsilon_i$ will run over each element of $\gf(r^2)^*$ exactly twice. So $M_k=2^{-k}M_{k,1}$ where $M_{k,1}$ is the number of $\lambda_i \in \gf(q)^*, y_i \in \gf(r)^*,z_i \in U,\epsilon_i \in \{\xi,1\}, 1 \le i \le r$ such that $\lambda_i, x_i=y_iz_i\epsilon_i \, \forall i$ satisfy the equations (\ref{42:nra}) simultaneously. Since
\[x_i^{d_j}=\left(y_i^2\epsilon_i^{r+1}\right)^f \left(z_i^{-2}\epsilon_i^{r-1}\right)^{jh}, \, \forall j, \]
The equations (\ref{42:nra}) can be written as
\begin{eqnarray} \label{42:nraa} \sum_{i=1}^k \lambda_i \cdot \left(y_i^2\epsilon_i^{r+1}\right)^f \left(z_i^{-2}\epsilon_i^{r-1}\right)^{jh}=0, \forall 0 \le j \le t. \end{eqnarray}

\subsection{Case 1: $m$ is odd} Then $\frac{r-1}{q-1} \equiv m \equiv 1 \pmod{2}$, we may take $\xi=\gamma^{(r-1)/(q-1)}$. Then $\xi^{r+1} =\gamma^{(r^2-1)/(q-1)} \in \gf(q)^*$. For each fixed $\epsilon_i$, denoting $\lambda_i'=\lambda_i \epsilon_i^{(r+1)f} \in \gf(q)^*$, so to find $M_{k,1}$, it is equivalent to count the number of $\lambda_i' \in \gf(q)^*, y_i \in \gf(r)^*, z_i \in U, \epsilon_i \in \{\xi,1\} \, \forall i$ such that
\begin{eqnarray*} \sum_{i=1}^k \lambda_i' \cdot y_i^{2f} \left(z_i^{-2}\epsilon_i^{r-1}\right)^{jh}=0, \forall 0 \le j \le t. \end{eqnarray*}
Since $\left(2f,\frac{r-1}{q-1}\right)=1$, $\lambda_i' \cdot y_i^{2f}$ takes each value of $\gf(r)^*$ exactly $(q-1)$ times as $\lambda_i'$ and $y_i$ run over the sets $\gf(q)^*$ and $\gf(r)^*$ respectively. Moreover, $\xi^{r-1}=\left(\gamma^{r-1}\right)^{(r-1)/(q-1)}$ is a non-square in $U$, hence $z_i^{-2} \epsilon_i^{r-1}$ takes each value of $U$ exactly twice as $z_i$ and $\epsilon_i$ run over the sets $U$ and $\{\xi,1\}$ respectively. So $M_{k,1}=2^k(q-1)^kM_{k,2}$ where $M_{k,2}$ counts the number of $y_i \in \gf(r)^*, z_i \in U \, \forall i$ such that
\begin{eqnarray*} \label{42:nrb} \sum_{i=1}^k y_i z_i^{jh}=0, \forall 0 \le j \le t. \end{eqnarray*}
Similar to (\ref{4:nrb}), the above equations can be solved completely by using a combinatorial method of \cite{XLZD}. We conclude that $M_{k,2}=N_k$, which is given by the formula (\ref{2:nr}). Interested readers may review \cite{XLZD} for details. Therefore we obtain $M_k=(q-1)^kN_k$ as desired.

\subsection{Case 2: $m$ and $h$ are both even}

In this case $r \equiv 1 \pmod{4}$, we may take $\xi=\gamma^{(r+1)/2}$. Hence $\xi^{(r-1)h}=1$ as $2|h$ and $\xi^{r+1}$ is a non-square in $\gf(r)^*$. So $y_i^2 \epsilon_i^{r-1}$ takes each value of $\gf(r)^*$ exactly twice as $y_i$ and $\epsilon_i$ run over $\gf(r)^*$ and $\{\xi,1\}$. Hence (\ref{42:nraa}) can be reduced to
\begin{eqnarray*} \sum_{i=1}^k \lambda_i \cdot y_i^f z_i^{-2jh}=0, \forall 0 \le j \le t. \end{eqnarray*}
Since $\left(f,\frac{r-1}{q-1}\right)=1$, $\lambda_i y_i^f$ will take each value of $\gf(r)^*$ exactly $(q-1)$ times as $\lambda_i$ and $y_i$ run over $\gf(q)^*$ and $\gf(r)^*$ respectively. We have $M_{k,1}=2^k(q-1)^kM_{k,2}$, where $M_{k,2}$ is the number of solutions $y_i \in \gf(r)^*, z_i \in U \, \forall i$ such that
\begin{eqnarray*} \sum_{i=1}^k y z_i^{-2jh}=0, \forall 0 \le j \le t. \end{eqnarray*}
Again by using combinatorial argument as in \cite{XLZD} we can obtain that $M_{k,2}=N_k$ for any $1 \le k \le 2t$ which is given by (\ref{2:nr}). Hence we conclude $M_k=(q-1)^kN_k$ as desired. This completes the proof of Lemma \ref{43:lem1}.
\end{proof}

Statement (ii) of Theorem \ref{1:thm2} can be proved similarly by using the above idea and by modifying the proof of \cite[Theorem 2]{XLZD} for $\cb$ accordingly. We omit the details. \qquad $\square$

\section{Conclusions}\label{sec-conclusion}
In this paper we extended \cite{gegeng2,XLZD} further in two directions, that is, for any prime $p$, $q=p^l$ and $r=q^m$, we determined the weight distribution of the cyclic codes $\ca,\cb$ over $\gf(q)$ whose duals have $t+1$ and $t$ generalized Niho type zeroes respectively for any $t$ (see Theorems \ref{1:thm1} and \ref{1:thm2}). Numerical examples show that the classes considered contain many optimal linear codes which were not presented in \cite{gegeng2,XLZD}. 

\section*{Acknowledgement}
M. Xiong's research was supported by the Hong Kong Research Grants Council under Grant Nos. 609513 and 606211. N. Li's research was supported by the Norwegian Research Council. 


\begin{thebibliography}{99}

\bibitem{AL06} {\sc Y. Aubry and P. Langevin}, {\em On the weights of binary irreducible cyclic codes,} in Proceedings of the 2005 international conference on Coding and Cryptography, Springer-Verlag, LNCS, {\bf 3969} (2006), pp. 46-54.

\bibitem{BM72} {\sc L. D. Baumert and R. J. McEliece}, {\em Weights of irreducible cyclic codes,} Information and Control, {\bf 20} (1972), pp. 158-175.

\bibitem{BM73} {\sc L. D. Baumert and J. Mykkeltveit}, {\em Weight distributions of some irreducible cyclic codes,} DSN Progress Report, {\bf 16} (1973), pp. 128-131.



\bibitem{cal} {\sc R. Calderbankand and W.M. Kantor}, {\em The geometry of two-weight codes,} Bull. Lond. Math. Soc., {\bf 18} (1986), pp. 97-122.



\bibitem{dels} {\sc P. Delsarte}, {\em On subfield subcodes of modified Reed-Solomon codes,} IEEE Trans. Inform. Theory, {\bf 21} (1975), pp. 575-576.

\bibitem{Ding2}	{\sc C. Ding, Y. Liu, C. Ma and L. Zeng}, {\em The weight distributions of the duals of cyclic codes with two zeroes,} IEEE Trans. Inform. Theory, {\bf 57} (2011), pp. 8000-8006.

\bibitem{D-Y12} {\sc C. Ding and J. Yang}, {\em Hamming weights in irreducible cyclic codes,} Discrete Mathematics, {\bf 313} (2013), pp. 434-446.

\bibitem{fhs} {\sc C. Ding, Y. Yang, and X. Tang}, {\em Optimal sets of frequency hopping sequences from linear cyclic codes,} IEEE Trans. Inform. Theory, {\bf 56} (2010), pp. 3605-3612.

\bibitem{FL08} {\sc K. Feng and J. Luo}, {\em Weight distribution of some reducible cyclic codes,} Finite Fields Appl., {\bf 14} (2008), pp. 390-409.

\bibitem{Feng12} {\sc T. Feng}, {\em On cyclic codes of length $2^{2^r}-1$ with two zeroes whose dual codes have three weights,} Des. Codes Cryptogr., {\bf 62} (2012),  pp. 253-258.


\bibitem{F-M12}	{\sc T. Feng and K. Momihara}, {\em Evaluation of the weight distribution of a class of cyclic codes based on index 2 Gauss sums,} IEEE Trans. Inform. Theory, {\bf 59} (2013), pp. 5980-5984.


\bibitem{fit} {\sc R. Fitzgerald and J. Yucas}, {\em Sums of Gauss sums and weights of irreducible codes,}  Finite Fields Appl., {\bf 11} (2005), pp. 89-110.

\bibitem{holl} {\sc H. D. L. Hollmann and Q. Xiang}, {\em On binary cyclic codes with few weights,} in  Proc. Finite Fields Appl. (Augsburg), Berline, Germany, 1999, pp. 251-275.

\bibitem{Klov} {\sc T. Kl{\o}ve}, {\em Codes for Error Detection,} Singapore: World Scientific, 2007.







\bibitem{gegeng} {\sc S. Li, S. Hu, T. Feng, and G. Ge}, {\em The weight distribution of a class of cyclic codes related to Hermitian forms graphs,} IEEE Trans. Inform. Theory, {\bf 59} (2013), pp. 3064-3067.

\bibitem{gegeng2} {\sc S. Li, T. Feng, and G. Ge}, {\em On the weight distribution of cyclic codes with Niho exponents,} IEEE Trans. Inform. Theory, {\bf 60} (2014), pp. 3903-3912.

\bibitem{luo2} {\sc J. Luo and K. Feng}, {\em On the weight distribution of two classes of cyclic codes,} IEEE Trans. Inform. Theory, {\bf 54} (2008), pp. 5332-5344.

\bibitem{luo3} {\sc J. Luo and K. Feng}, {\em Cyclic codes and sequences from generalized Coulter-Matthews function,} IEEE Trans. Inform. Theory, {\bf 54} (2008), pp. 5345-5353.

\bibitem{luo4} {\sc J. Luo, Y. Tang, and H. Wang}, {\em Cyclic codes and sequences: The generalized Kasami case,} IEEE Trans. Inform. Theory, {\bf 56} (2010), pp. 2130-2142.


\bibitem{Ding1} {\sc C. Ma, L. Zeng, Y. Liu, D. Feng and C. Ding}, {\em The weight enumerator of a class of cyclic codes,} IEEE Trans. Inform. Theory, {\bf 57} (2011), pp. 397-402.



\bibitem{McE72} {\sc R. J. McEliece and J. H. Rumsey}, {\em Euler products, cyclotomy, and coding,} J. Number Theory, {\bf 4} (1972), pp. 302-311.



\bibitem{M09} {\sc M. Moisio}, {\em Explicit evaluation of some exponential sums,} Finite Fields Appl., {\bf 15} (2009), pp. 644-651.



\bibitem{Mois09} {\sc M. Moisio, K. Ranto, M. Rintaaho, and K. V\"a\"an\"anen}, {\em On the weight distribution of the duals of irreducible cyclic codes, cyclic codes with two zeroes and hyper-Kloosterman codes,} Adv. Appl. Discrete Math. {\bf 3} (2009), pp. 155-164.


\bibitem{Niho-PhD}{\sc Y. Niho}, {\em Multivalued cross-correlation functions between two maximal
linear recursive sequence}, Ph.D. dissertation, Univ. Southern Calif., Los
Angeles, 1970.

\bibitem{Rao10} {\sc A. Rao and N. Pinnawala}, {\em A family of two-weight irreducible cyclic codes,} IEEE Trans. Inform. Theory, {\bf 56} (2010), pp. 2568-2570.

\bibitem{schmidt} {\sc B. Schmidt and C. White}, {\em All two-weight irreducible cyclic codes?,} Finite Fields Appl., {\bf 8} (2002), pp. 1-17.

\bibitem{Schroof} {\sc R. Schroof}, {\em Families of curves and weight distribution of codes,} Bull. Amer. Math. Soc., {\bf 32} (1995), 171-183.

\bibitem{qc} {\sc A. Thangaraj and S. McLaughlin}, {\em Quantum codes from cyclic codes over GF($4^m$),} IEEE Trans. Inform. Theory, {\bf 47} (2001), pp. 1176-1178.



\bibitem{van} {\sc M. van der Vlugt}, {\em Hasse-Davenport curves, Gauss sums, and weight distributions of irreducible cyclic codes,} J. Number Theory, {\bf 55} (1995), pp. 145-159.

\bibitem{Vega1} {\sc G. Vega}, {\em Determining the number of one-weight cyclic codes when length and dimension are given,} in Arithmetic of Finite Fields. Berlin, Germany: Springer, Lecture Notes Comput. Sci, {\bf 4547} (2007), pp. 284-293.

\bibitem{Vega2} {\sc G. Vega and J. Wolfmann}, {\em New classes of 2-weight cyclic codes,} Des. Codes Cryptogr., {\bf 42} (2007), pp. 327-334.

\bibitem{Vega12} {\sc G. Vega}, {\em The weight distribution of an extended class of reducible cyclic codes,} IEEE Trans. Inform. Theory, {\bf 58} (2012), pp. 4862-4869.

\bibitem{Tang12} {\sc B. Wang, C. Tang, Y. Qi, Y. Yang and M. Xu}, {\em The weight distributions of cyclic codes and elliptic curves,}  IEEE Trans. Inform. Theory, {\bf 58} (2012), pp. 7253-7259.

\bibitem{wol} {\sc J. Wolfmann}, {\em Weight distributions of some binary primitive cyclic codes,}  IEEE Trans. Inform. Theory, {\bf 40} (1994), pp. 2068-2071.

\bibitem{Xiong1} {\sc M. Xiong}, {\em The weight distributions of a class of cyclic codes,}   Finite Fields Appl., {\bf 18} (2012), pp. 933-945.

\bibitem{Xiong2} {\sc M. Xiong}, {\em The weight distributions of a class of cyclic codes II,} Des. Codes Cryptogr., {\bf 72} (2014), pp. 511-528.

\bibitem{Xiong3} {\sc M. Xiong}, {\em The weight distributions of a class of cyclic codes III,}  Finite Fields Appl., {\bf 21} (2013), pp. 84-96.

\bibitem{XLZD} {\sc M. Xiong, N. Li, Z. Zhou and C. Ding}, {\em Weight distribution of cyclic codes with arbitrary number of generalised Niho type zeroes,} to appear in Des. Codes Cryptogr.

\bibitem{Y-X-D12} {\sc J. Yang, M. Xiong, C. Ding, and J. Luo}, {\em Weight distribution of a class of cyclic codes with arbitrary number of zeroes,} IEEE Trans. Inform. Theory, {\bf 59} (2013), pp. 5985-5993.

\bibitem{Y-X-X14} {\sc J. Yang, L. Xia, and M. Xiong}, {\em Weight distributions of a class of cyclic codes with arbitrary number of zeroes II,} CoRR abs/1405.6256, 2014.

\bibitem{zeng} {\sc X. Zeng, L. Hu, W. Jiang, Q. Yue, and X. Cao}, {\em Weight distribution of a $p$-ary cyclic code,}  Finite Fields Appl.,  {\bf 16} (2010), pp. 56-73.




\end{thebibliography}
\end{document}